%% LyX 2.2.1 created this file.  For more info, see http://www.lyx.org/.
%% Do not edit unless you really know what you are doing.
\documentclass[10pt,letterpaper]{article}
\usepackage[latin9]{inputenc}
\usepackage{units}
\usepackage{amsmath}
\usepackage{amsthm}
\usepackage{amssymb}
\usepackage{graphicx}

\makeatletter

%%%%%%%%%%%%%%%%%%%%%%%%%%%%%% LyX specific LaTeX commands.

%%%%%%%%%%%%%%%%%%%%%%%%%%%%%% Textclass specific LaTeX commands.
\theoremstyle{plain}
\newtheorem{thm}{\protect\theoremname}
\theoremstyle{plain}
\newtheorem{prop}[thm]{\protect\propositionname}
\ifx\proof\undefined
\newenvironment{proof}[1][\protect\proofname]{\par
\normalfont\topsep6\p@\@plus6\p@\relax
\trivlist
\itemindent\parindent
\item[\hskip\labelsep\scshape #1]\ignorespaces
}{%
\endtrivlist\@endpefalse
}
\providecommand{\proofname}{Proof}
\fi
\theoremstyle{plain}
\newtheorem{cor}[thm]{\protect\corollaryname}

%%%%%%%%%%%%%%%%%%%%%%%%%%%%%% User specified LaTeX commands.
\usepackage[OE]{express}
\usepackage{cite}

\newcommand{\argmax}{\mathop{\mathrm{argmax}}}
\newcommand{\argmin}{\mathop{\mathrm{argmin}}}
\newcommand{\erfc}{\mathop{\mathrm{erfc}}\nolimits}

\usepackage[printonlyused]{acronym}
\acrodef{MAP}{maximum a posteriori probability}
\acrodef{ML}{maximum likelihood}
\acrodef{QAM}{quadrature amplitude modulation}
\acrodef{QPSK}{quadrature phase-shift keying}
\acrodef{ISI}{intersymbol interference}
\acrodef{GLME}{Gelfand-Levitan-Marchenko equation}
\acrodef{NFDM}{Nonlinear frequency-division multiplexing}
\acrodef{NFT}{nonlinear Fourier transform}
\acrodef{FNFT}{forward NFT}
\acrodef{BNFT}{backward NFT}
\acrodef{DF-BNFT}{decision-feedback BNFT}
\acrodef{OFDM}{orthogonal frequency-division multiplexing}
\acrodef{TX}{transmitter}
\acrodef{RX}{receiver}
\acrodef{FT}{Fourier transform}
\acrodef{DAC}{digital-to-analog converter}
\acrodef{ADC}{analog-to-digital converter}
\acrodef{LP}{Layer-Peeling}
\acrodef{SNR}{signal-to-noise ratio}
\acrodef{NIS}{Nonlinear inverse synthesis}
\acrodef{DBP}{digital backpropagation}
\acrodef{QAM}{quadrature amplitude modulation}
\acrodef{SNR}{signal to noise ratio}
\acrodef{SER}{symbol error rate}
\acrodef{SE}{spectral efficiency}
\acrodef{EDC}{electronic dispersion compensation}
\acrodef{NLSE}{nonlinear Schr\"odinger equation}
\acrodef{AWGN}{additive white gaussian noise}
\acrodef{GVD}{group velocity dispersion}

\makeatother

\providecommand{\corollaryname}{Corollary}
\providecommand{\propositionname}{Proposition}
\providecommand{\theoremname}{Theorem}

\begin{document}

\title{Decision-feedback detection strategy for nonlinear frequency-division
multiplexing}

\author{Stella Civelli,\authormark{1,2,{*}} Enrico Forestieri,\authormark{1,2} and Marco Secondini\authormark{1,2}}

\address{\authormark{1}TeCIP Institute, Scuola Superiore Sant'Anna, Pisa,
Italy\\
\authormark{2}Photonic Networks and Technologies Nat'l Lab, CNIT,
Pisa, Italy}

\email{\authormark{*}stella.civelli@santannapisa.it }
\begin{abstract}
By exploiting a causality property of the nonlinear Fourier transform,
a novel decision-feedback detection strategy for nonlinear frequency-division
multiplexing (NFDM) systems is introduced. The performance of the
proposed strategy is investigated both by simulations and by theoretical
bounds and approximations, showing that it achieves a considerable
performance improvement compared to previously adopted techniques
in terms of Q-factor. The obtained improvement demonstrates that,
by tailoring the detection strategy to the peculiar properties of
the nonlinear Fourier transform, it is possible to boost the performance
of NFDM systems and overcome current limitations imposed by the use
of more conventional detection techniques suitable for the linear
regime.
\end{abstract}

\ocis{(060.1660) Coherent communications; (060.2330) Fiber optics communications;
(060.4370) Nonlinear optics, fibers.}

\bibliographystyle{osajnl}

\section{Introduction}

Nowadays, most of the global data traffic is carried by optical fibers.
However, optical fiber Kerr nonlinearity severely limits the capacity
of current optical communication systems, whose design is based on
concepts developed for linear systems\cite{turitsyn2017optica}. For
this reason, recently, there has been growing interest in \ac{NFT}-based
transmission schemes \cite{le2014nonlinear,le2016,turitsyn2017optica,tavakkolnia2016,Yousefi2014_NFT,Yousefi2016,bulow2015experimental}.
The \ac{NFT} \cite{Yousefi2014_NFT,ablowitz1981}, a sort of nonlinear
analogue of the standard \ac{FT}, defines a nonlinear spectrum that
evolves trivially and linearly along the optical channel on its nonlinear
frequency domain, thus turning nonlinearity into a mean for information
transfer rather than an impairment. \ac{NFDM} is a novel transmission
paradigm that aims at mastering nonlinearity using the \ac{NFT} to
avoid any deterministic intra and interchannel interference by encoding
information directly on the nonlinear spectrum \cite{Yousefi2014_NFT,Yousefi2016}.
However, research about \ac{NFDM} schemes is still ongoing and it
is not yet clear whether \ac{NFDM} can outperform conventional systems
\cite{civelli2017noise,menyuk2017NFT}. We refer to \cite{turitsyn2017optica}
for a complete review about different \ac{NFT}-based transmission
schemes.

\ac{NIS} \cite{le2014nonlinear} is a popular \ac{NFDM} scheme based
on the \ac{NFT} with vanishing boundary conditions. \ac{NIS} maps
the information on the continuous part of the nonlinear spectrum through
a \ac{BNFT}, and then recovers it with the reverse operation, the
\ac{FNFT}. An intrinsic limitation of this scheme with respect to
conventional systems was highlighted in a previous work \cite{civelli2017noise}.
Indeed, the insertion of guard symbols between different bursts, as
required by \ac{NIS}, causes a reduction of the overall spectral
efficiency that, differently from conventional systems, cannot be
mitigated by increasing the burst length, as system performance decays
for longer bursts. Moreover, in \cite{civelli2017noise} it is conjectured
that the \ac{NIS} performance decay is due to the considered symbol-by-symbol
detection strategy, based on the Euclidean distance, which is optimal
only in the linear regime and does not account for the statistics
of noise in the nonlinear spectrum.

In this work, we exploit a powerful property of the \ac{NFT} to devise
a novel detection strategy for \ac{NFDM} systems. The proposed strategy
employs decision feedback and the \ac{BNFT} to avoid the detrimental
effect that an increase of the burst length has on the noise statistics
in the nonlinear frequency domain. After introducing and discussing
the \ac{DF-BNFT} strategy, system performance obtained through simulations
are presented, and theoretical performance estimations are given.

The paper is organized as follows. Section~2 states and proves a
causality property of the \ac{NFT} that plays a key role in the derivation
of the \ac{DF-BNFT} detection strategy. Section~3 describes the
system and derives two important corollaries of the causality property.
Section~4 introduces and explains the \ac{DF-BNFT} detection strategy.
Section~5 derives an approximation, a lower bound, and an upper bound
to the probability of error of \ac{DF-BNFT}. Section~6 deals with
\ac{DF-BNFT} performance; also, it compares \ac{DF-BNFT} with standard
\ac{FNFT} detection and conventional systems in terms of performance,
considering different scenarios. Section~7 validates the approximation
and bounds derived in Section~5 by comparison with simulation results.
Finally, Section~8 concludes the paper.

\section{NFT causality property}

The \ac{BNFT}\textemdash by which a time domain signal $r(t)$ is
recovered from its nonlinear spectrum $\rho(\lambda)$\textemdash can
be computed via the \ac{GLME}\cite{ablowitz1981} 

\begin{equation}
K(x,y)-\sigma F^{*}(x+y)+\sigma\int_{x}^{\infty}\!\int_{x}^{\infty}\!K(x,r)F(r+s)F^{*}(s+y)\,\mathrm{d}r\mathrm{d}s=0\,,\label{eq:GLME}
\end{equation}
where $\sigma=1$ in the focusing regime and $\sigma=-1$ in the defocusing
regime, $^{*}$ denotes the complex conjugate, and $F(y)$ is an integral
function of the nonlinear spectrum \cite{ablowitz1981}. Specifically,
if there is no discrete spectrum, 
\begin{equation}
F(y)=\frac{1}{2\pi}\int_{-\infty}^{\infty}\rho(\lambda)e^{j\lambda y}\,\mathrm{d}\lambda.\label{eq:Fy}
\end{equation}
 The time domain signal is recovered from the GLME solution $K(x,y)$
as $r(t)=-2K(t,t)$. 
\begin{prop}
\emph{\label{prop:(Causality-Property)}(}\ac{NFT}\emph{ Causality
Property)} Given a generic time instant $\tau$, for $t\geq\tau$
the time domain signal $r(t)$ depends only on the values of $F(y)$
for $y\geq2\tau$. 
\end{prop}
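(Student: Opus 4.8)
The plan is to exploit the fact that, although the \ac{GLME}~\eqref{eq:GLME} couples the values of $K(x,y)$ over the whole quarter‑plane $\{y\ge x\}$, for each \emph{fixed} $x$ it is a self‑contained integral equation for the single‑variable function $y\mapsto K(x,y)$ on $[x,\infty)$, and the only way the kernel $F$ enters that equation is through its values at arguments that are \emph{all} bounded below by $2x$. Since the time‑domain signal is read off the diagonal, $r(t)=-2K(t,t)$, recovering $r$ on $[\tau,\infty)$ only requires solving the \ac{GLME} for $x\ge\tau$, and hence only requires $F$ on $[2\tau,\infty)$.

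Concretely, I would proceed as follows. Fix $\tau$ and pick an arbitrary $t\ge\tau$; it suffices to show that $K(t,t)$ is determined by the restriction $F|_{[2\tau,\infty)}$. Specialize \eqref{eq:GLME} to $x=t$ and substitute $y=t+u$, $r=t+v$, $s=t+w$ with $u,v,w\ge 0$. Writing $G(u):=K(t,t+u)$ and $\tilde F(\xi):=F(2t+\xi)$, the \ac{GLME} becomes
\[
G(u)-\sigma\,\tilde F^{*}(u)+\sigma\int_{0}^{\infty}\!\!\int_{0}^{\infty} G(v)\,\tilde F(v+w)\,\tilde F^{*}(w+u)\,\mathrm{d}v\,\mathrm{d}w=0,\qquad u\ge 0 .
\]
Every evaluation of $F$ appearing here has argument $\ge 2t\ge 2\tau$, so this is a closed equation whose data is precisely $F|_{[2t,\infty)}\subseteq F|_{[2\tau,\infty)}$. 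Invoking the unique solvability of the \ac{GLME}\cite{ablowitz1981}\,---\,the standing hypothesis under which the \ac{BNFT} is well defined\,---\,$G$ is a well‑defined functional of $\tilde F$ alone, and therefore $r(t)=-2K(t,t)=-2G(0)$ depends only on the values of $F(y)$ for $y\ge 2\tau$. Since $t\ge\tau$ was arbitrary, the whole restriction $r|_{[\tau,\infty)}$ is determined by $F|_{[2\tau,\infty)}$, which is the claim.

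The only delicate point is the uniqueness invoked in the last step: we must know that the fixed‑$x$ equation above pins down $G$ uniquely, for otherwise ``depends only on'' would be meaningless. I expect this to be the main obstacle, and I would dispatch it in one of two equivalent ways. Either I cite the classical solvability theory for the \ac{GLME}\cite{ablowitz1981}\,---\,which is in any case implicit whenever one speaks of \emph{the} inverse \ac{NFT}\,---\,or, more self‑containedly, I construct $G$ as the limit of the Picard/Neumann iteration obtained by moving the integral term to the right‑hand side and starting from $G_{0}(u)=\sigma\tilde F^{*}(u)$: each iterate manifestly involves $\tilde F$ only, and under the usual decay/integrability assumptions on $\rho$ (equivalently on $F$) the associated integral operator is a contraction on a suitable space, so the iteration converges to the unique solution. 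In either case the conclusion is the same, and the remainder of the argument is the elementary bookkeeping of arguments already carried out above.
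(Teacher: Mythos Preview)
Your proposal is correct and follows essentially the same route as the paper: fix $x=t\ge\tau$, observe that the \ac{GLME} at that $x$ is a closed integral equation for $y\mapsto K(t,y)$ in which $F$ enters only through arguments $\ge 2t\ge 2\tau$, and then read off $r(t)=-2K(t,t)$. Your version is somewhat more careful---the explicit change of variables and the discussion of uniqueness via Picard iteration make precise what the paper leaves implicit---but the underlying argument is the same.
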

\begin{proof}
Since $r(t)=-2K(t,t)$, one should obtain the solution $K(t,t)$ of
Eq.~(\ref{eq:GLME}) for all $t\geq\tau$. Given $t_{1},t_{2}\geq\tau$,
the solution $K(t_{1},t_{2})=\sigma F^{*}(t_{1}+t_{2})-\sigma\int_{t_{1}}^{\infty}\!\int_{t_{1}}^{\infty}\!K(t_{1},r)F(r+s)F^{*}(s+t_{2})\,\mathrm{d}r\mathrm{d}s$
depends on $F(y)$ for all $y\geq2\tau$ (since $r+s\geq2t_{1}\geq2\tau$
and $s+t_{2}\geq t_{1}+t_{2}\geq2\tau$) and on all the solutions
$K(t_{1},r)$ for $r\geq t_{1}$. Nevertheless, since $t_{1}\geq\tau$,
$K(t_{1},r)$ is one of the solutions $K(t_{1},t_{2})$ for $t_{1},t_{2}\geq\tau$.
Therefore, for $t_{1},t_{2}\geq\tau$, $K(t_{1},t_{2})$ depends only
on $F(y)$ for $y\geq2\tau$. Consequently, considering $t_{1}=t_{2}=t$,
one obtains that, for $t\geq\tau$, $r(t)=-2K(t,t)$ depends only
on $F(y)$ for $y\geq2\tau$.
\end{proof}

\section{System description\label{sec:System-description}}

\begin{figure}
\begin{centering}
\includegraphics[width=0.9\columnwidth]{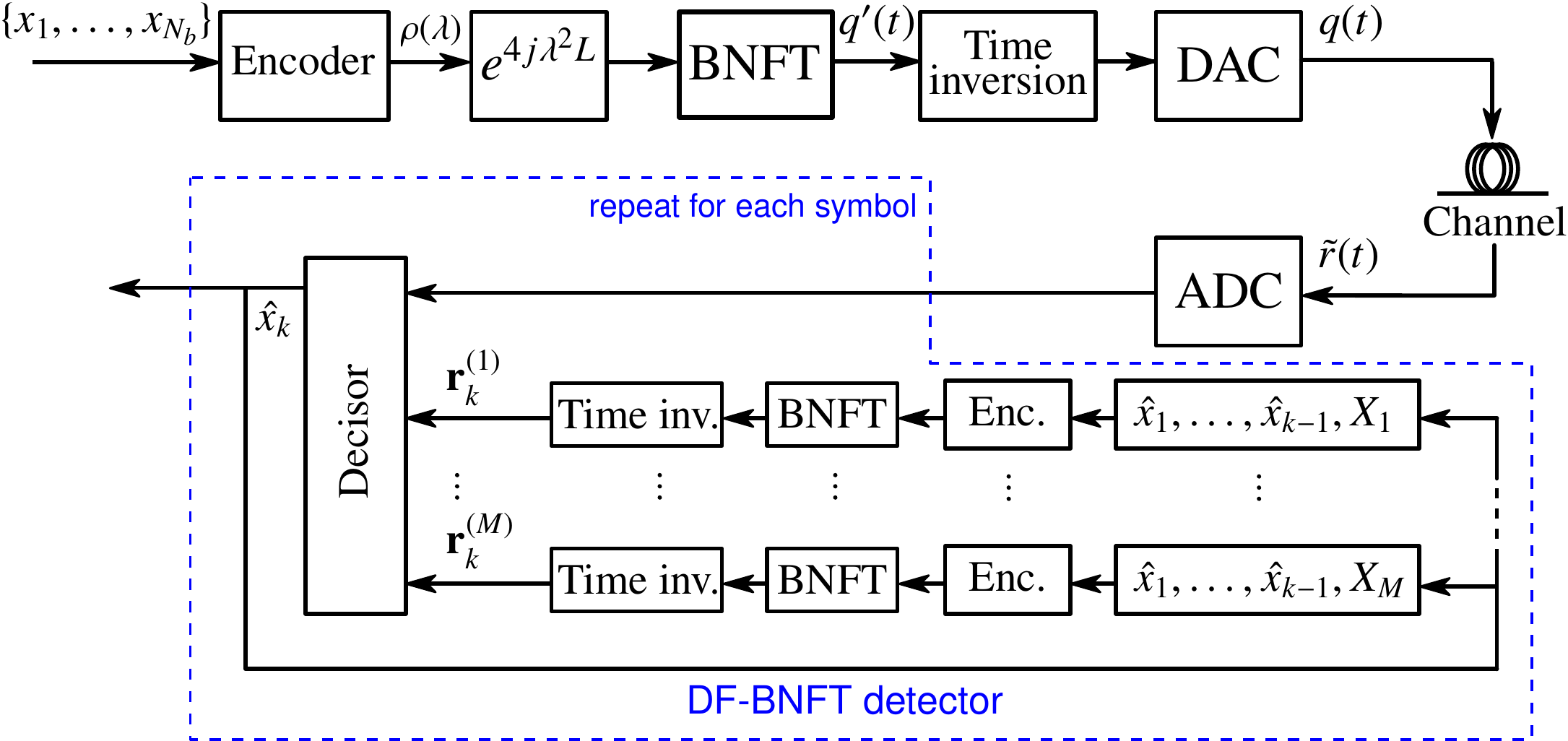}
\par\end{centering}
\caption{\label{fig:1}\protect\ac{NFDM} with \protect\ac{DF-BNFT} detection. }
\end{figure}
The transmission scheme considered in this work is sketched in Fig.~\ref{fig:1}.
As in the \ac{NIS} scheme \cite{le2014nonlinear}, the \ac{TX} encodes
a burst of $N_{b}$ symbols~$\{x_{1},\ldots,x_{N_{b}}\}$ drawn from
an $M$-ary \ac{QAM} alphabet $\{X_{1},\ldots,X_{M}\}$ onto a QAM
signal
\begin{equation}
s(t)=\sum_{k=1}^{N_{b}}x_{k}g[t-(k-1)T_{s}]\label{eq:QAM_signal}
\end{equation}
with pulse shape $g(t)$ and symbol time $T_{s}$. In this work,
for reasons that will be clarified later, we restrict $g(t)$ to have
a finite duration $T\le T_{s}$. The ordinary Fourier transform $S(f)$
of (\ref{eq:QAM_signal}) is then mapped on the continuous part of
the nonlinear spectrum $\rho(\lambda)$ according to
\begin{equation}
\rho(\lambda)=-S(-\lambda/\pi).\label{eq:NFT-mapping}
\end{equation}
Furthermore, before computing the \ac{BNFT}, deterministic propagation
effects (dispersion and nonlinearity) are precompensated by multiplying
the nonlinear spectrum by $\exp(j4\lambda^{2}L)$, where $L$ is the
link length. Finally, the input optical signal is taken to be $q(t)=q'(-t)$,
where $q'(t)$ is the \ac{BNFT} of the precompensated nonlinear spectrum.

There are two differences between the \ac{TX} described here and
the one in \cite{le2014nonlinear}, which, however, do not change
the overall \ac{NIS} working principle. Firstly, propagation effects
are removed at the \ac{TX} (precompensation) rather than at the \ac{RX}.
While both solutions are feasible (and splitting the compensation
between TX and RX might even be advantageous \cite{civelli2017noise}),
precompensation is considered here because the proposed \ac{DF-BNFT}
detection strategy relies on it. Indeed, Corollary \ref{cor:noISI},
on which \ac{DF-BNFT} is based (as explained in the following), applies
only to optical signals unaffected by propagation effects. Precompensation
ensures that the received signal meet this requirement.  Secondly,
in this work each burst in the optical signal is inverted in time
just to help explaining the working principle, not because it is necessary.
Finally, for the sake of simplicity, we avoid explicitly mentioning
the normalization and denormalization procedures required to relate
the normalized signals $q(t)$ and $\tilde{r}(t)$ (see Fig.\,\ref{fig:1})
to the physical signals propagating in a real fiber link, therefore
assuming that the channel is characterized by the normalized version
of the \ac{NLSE} given in \cite{Yousefi2014_NFT}.

Let us denote by $r(t)$ the optical signal obtained by propagating
$q(t)$ in a \emph{noiseless} channel. In this case, the only channel
effect is the multiplication of the nonlinear spectrum by $\exp(-j4\lambda^{2}L)$,
so that $r'(t)=r(-t)$ would be the BNFT of $\rho(\lambda)$, the
nonlinear spectrum before precompensation.
\begin{cor}
\label{cor:QAM-signal}Considering a \ac{NIS} modulation, the above
optical signal $r(t)$ for $t\leq\tau$ depends only on the values
of the QAM signal $s(t)$ for $t\leq\tau$. 
\end{cor}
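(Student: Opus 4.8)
The plan is to deduce this from the \ac{NFT} causality property, Proposition~\ref{prop:(Causality-Property)}, after two elementary reductions: one turning the statement ``$r(t)$ for $t\le\tau$'' into a statement about a time‑reversed signal on a half‑line of the form $t'\ge-\tau$, and one rewriting the \ac{GLME} kernel $F$ of \eqref{eq:Fy} in terms of the \ac{QAM} signal $s(t)$. First I would invoke the system description itself: propagating $q(t)$ through the noiseless channel multiplies the nonlinear spectrum by $\exp(-j4\lambda^{2}L)$, which exactly cancels the precompensation factor $\exp(j4\lambda^{2}L)$, so that $r'(t):=r(-t)$ is precisely the \ac{BNFT} of $\rho(\lambda)$ and hence is governed by \eqref{eq:GLME} with the kernel $F$ of \eqref{eq:Fy}. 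In particular, knowing $r(t)$ for all $t\le\tau$ is the same as knowing $r'(t')$ for all $t'\ge-\tau$.

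Second, I would express $F$ through $s$ by substituting the \ac{NIS} mapping \eqref{eq:NFT-mapping}, $\rho(\lambda)=-S(-\lambda/\pi)$, into \eqref{eq:Fy} and changing the integration variable to $f=-\lambda/\pi$. The resulting integral is recognized as an inverse Fourier transform of $S$, giving $F(y)=-\tfrac12\,s(-y/2)$ with the convention $s(t)=\int S(f)e^{j2\pi ft}\,df$ (the precise constant and the exact scaling of the argument play no role in what follows). The point is that $F(y)$ is, up to reflection and dilation, a copy of $s$; consequently the restriction of $F$ to the half‑line $y\ge-2\tau$ is a function of, and only of, the values of $s(t)$ at $t=-y/2\le\tau$.

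With these two reductions in hand, I would apply Proposition~\ref{prop:(Causality-Property)} to the signal $r'$ with the generic time instant $-\tau$: for $t'\ge-\tau$, $r'(t')$ depends only on the values of $F(y)$ for $y\ge2(-\tau)=-2\tau$. Chaining this with the first reduction ($r(t)$ for $t\le\tau$ $\Leftrightarrow$ $r'(t')$ for $t'\ge-\tau$) and the second ($F$ on $y\ge-2\tau$ is determined by $s$ on $t\le\tau$) yields that $r(t)$ for $t\le\tau$ depends only on $s(t)$ for $t\le\tau$, which is the assertion.

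I do not expect any conceptual obstacle here: the whole content is bookkeeping. The one place to be careful is the change of variables $\lambda\mapsto f=-\lambda/\pi$ in Step~2 — getting the Fourier‑transform convention, together with the reflection and rescaling it induces, to line up so that $F(y)$ really is a reflected, dilated copy of $s$ — and, in Step~1, using the time reversal $r'(t)=r(-t)$ in the correct direction, so that the half‑line $t\le\tau$ is carried to $y\ge-2\tau$ (and not to $y\le 2\tau$) in the argument of $F$ before Proposition~\ref{prop:(Causality-Property)} is applied.
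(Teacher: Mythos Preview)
Your proposal is correct and follows essentially the same route as the paper: compute $F(y)=-\tfrac12 s(-y/2)$ from \eqref{eq:Fy} and \eqref{eq:NFT-mapping}, apply Proposition~\ref{prop:(Causality-Property)} to $r'(t)=r(-t)$, and undo the time reversal. The only cosmetic difference is that the paper applies the proposition at the instant $\tau$ and then replaces $\tau$ by $-\tau$, whereas you apply it directly at $-\tau$; the content is identical.
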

\begin{proof}
Taking into account (\ref{eq:Fy}), (\ref{eq:QAM_signal}) and (\ref{eq:NFT-mapping}),
we have that $F(y)=-s(-y/2)/2$ and thus, $F(y)$ for $y\geq2\tau$
depends only on the values of $s(t)$ for $t\leq-\tau$. Therefore,
recalling that $r'(t)$ is the \ac{BNFT} of $\rho(\lambda)$, the
\ac{NFT} causality property (Proposition~\ref{prop:(Causality-Property)})
implies that $r'(t)$ for $t\geq\tau$ depends only on $s(t)$ for
$t\leq-\tau$. Changing the sign of $\tau$, one obtains that $r'(t)$
for $t\geq-\tau$ depends only on $s(t)$ for $t\leq\tau$. Finally,
the thesis follows considering that $r(t)=r'(-t)$.
\end{proof}
\begin{cor}
\label{cor:noISI}Let $t_{k}=(k-1/2)T_{s}$. If the pulse shape $g(t)$
in (\ref{eq:QAM_signal}) has finite duration $T\le T_{s}$, for $t\le t_{k}$
the optical signal $r(t)$ depends only on the symbols $x_{1},..,x_{k}$,
and not on the following ones $x_{k+1},..,x_{N_{b}}$, as shown in
Fig.~\ref{fig:2}. In mathematical formulas 
\begin{equation}
r(t)\bigl|_{t\leq t_{k}}=\mathcal{G}\left\{ x_{1},..,x_{k}\right\} ,\label{eq:NFTproperty}
\end{equation}
where $\mathcal{G}$ is a generic function.
\end{cor}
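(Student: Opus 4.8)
The plan is to obtain this as an immediate consequence of Corollary~\ref{cor:QAM-signal} combined with an elementary support argument for the waveform~(\ref{eq:QAM_signal}). First I would apply Corollary~\ref{cor:QAM-signal} with the particular choice $\tau=t_k=(k-1/2)T_s$; this already yields that $r(t)$ for $t\le t_k$ is a function of the values of $s(t)$ for $t\le t_k$ only. The statement~(\ref{eq:NFTproperty}) therefore reduces to checking that the restriction of $s(t)$ to $t\le t_k$ depends only on $x_1,\dots,x_k$.

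For that, I would use the finite-duration hypothesis on $g$. The term carrying $x_m$ in~(\ref{eq:QAM_signal}) is the shifted pulse $g[t-(m-1)T_s]$, whose support is an interval of length $T$ anchored at $(m-1)T_s$. The condition $T\le T_s$ is precisely what is needed so that, for every $m\ge k+1$, this support is disjoint from $(-\infty,t_k]$: the pulse carrying $x_{k+1}$ can begin no earlier than $kT_s$ (or $kT_s-T/2$ with a symmetric convention for $\mathrm{supp}\,g$), which is $\ge t_k=(k-1/2)T_s$ exactly because $T\le T_s$, the borderline case $T=T_s$ contributing only the value of $g$ at an endpoint of its support, i.e.\ nothing. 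Hence, for $t\le t_k$, the sum in~(\ref{eq:QAM_signal}) truncates to $\sum_{m=1}^{k}x_m\,g[t-(m-1)T_s]$, which is manifestly a function of $x_1,\dots,x_k$ alone. Composing with the map furnished by Corollary~\ref{cor:QAM-signal} gives the required $\mathcal{G}$ in~(\ref{eq:NFTproperty}), and the asserted independence of $r(t)|_{t\le t_k}$ from $x_{k+1},\dots,x_{N_b}$ is exactly this truncation; see Fig.~\ref{fig:2}.

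The only point that needs care — the ``main obstacle,'' modest as it is — is the bookkeeping of pulse supports against the half-symbol offset in the definition $t_k=(k-1/2)T_s$: one must fix the convention for $\mathrm{supp}\,g$ and verify that $T\le T_s$ genuinely forces every pulse of index $>k$ to vanish identically on $(-\infty,t_k]$, boundary case included. Once that is pinned down, the corollary follows by chaining this truncation with Corollary~\ref{cor:QAM-signal}.
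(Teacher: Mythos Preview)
Your proposal is correct and follows essentially the same approach as the paper: invoke Corollary~\ref{cor:QAM-signal} with $\tau=t_k$, and then use the finite-duration hypothesis on $g$ to conclude that $s(t)|_{t\le t_k}$ depends only on $x_1,\dots,x_k$. The paper's proof is a two-line version of yours that simply asserts the truncation of $s(t)$ without spelling out the support bookkeeping you (rightly) flag as the only point requiring care.
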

\begin{proof}
With this hypothesis, $s(t)$ depends only on the symbols $x_{1},\ldots,x_{k}$
for $t\leq t_{k}$. Then, applying Corollary~\ref{cor:QAM-signal},
the thesis follows. 
\end{proof}
\begin{figure}
\begin{centering}
\includegraphics[width=0.75\columnwidth]{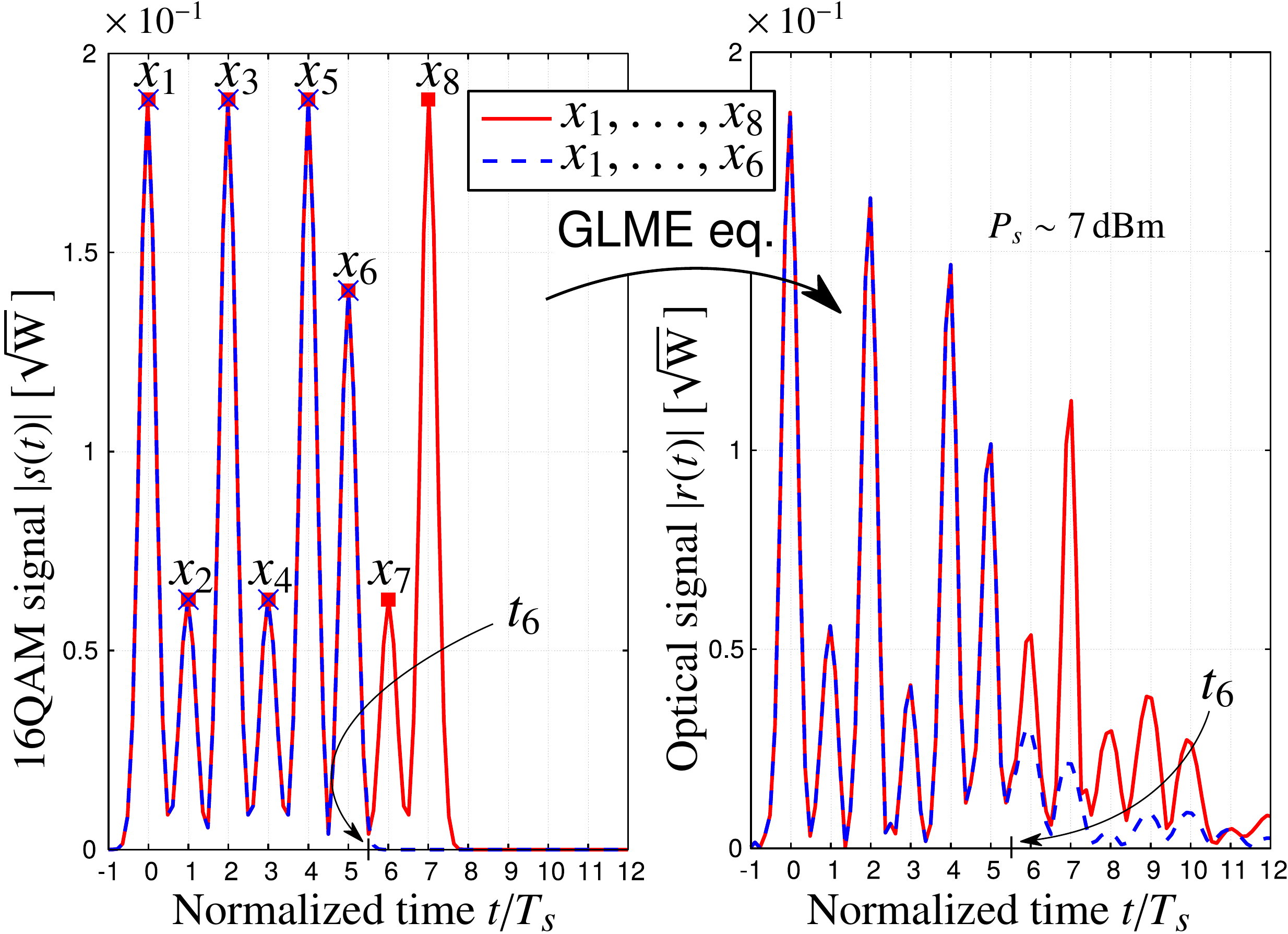}
\par\end{centering}
\caption{\label{fig:2}The \protect\ac{NFT} causality property for \protect\ac{NIS}
with no \protect\ac{ISI} on $s(t)$. A train of Gaussian pulses,
modulated by 16\protect\ac{QAM} symbols, and almost \protect\ac{ISI}-free,
is shown before (on the left) and after (on the right) the \protect\ac{BNFT}.
The red signal is generated by $8$ symbols, while for the blue one
only the first $6$ are taken into account. The two optical signals
are superimposed for $t\leq t_{6}$, as for Eq.~(\ref{eq:NFTproperty})
(baudrate $R_{s}=\unit[50]{GBd}$, optical power $P_{s}=\unit[7]{dBm}$).}
\end{figure}

\section{Decision-feedback BNFT detection\label{sec:Decision-feedback-BNFT-detection}}

In conventional \ac{NIS}, the \ac{RX} recovers a noisy version of
the transmitted nonlinear spectrum $\rho(\lambda)$ by computing the
\ac{FNFT} of the received optical signal, and then makes decisions
based on standard matched filtering and symbol-by-symbol detection.
The improved detection scheme proposed in this work originates from
the idea that, since a detrimental signal-noise interaction takes
place when computing the \ac{FNFT} of the received noisy signal,
decisions could be alternatively made by comparing the received signal
with the \ac{BNFT} of all possible transmitted (noiseless) waveforms,
thus avoiding signal-noise interaction effects. Selecting the waveform
(and the corresponding symbols) closest to the received optical signal
would correspond to a \ac{MAP} strategy, under the assumption that
the accumulated optical noise can be modeled as \ac{AWGN} (more on
this later). The drawback of a sequence\textemdash rather than symbol-by-symbol\textemdash detection
strategy, is an exponential growth of the detector complexity with
the burst length $N_{b}$. In order to avoid this growth, the \ac{NFT}
causality property and a decision-feedback scheme are finally employed,
obtaining the \ac{DF-BNFT} detection scheme depicted in Fig.~\ref{fig:1}.

Having in mind that \ac{NIS} performance decay is caused by the detection
strategy itself, rather than signal-noise interaction during propagation\textemdash as
also hinted by the fact that the performance obtained by simply adding
AWGN after noiseless propagation is superimposed with that of the
actual optical link \cite{civelli2017noise}\textemdash the aim of
this work is to devise a detection strategy at least optimal for the
\ac{AWGN} channel. Therefore, assuming the channel as \ac{AWGN},
we can write the received noisy optical signal as 
\begin{equation}
\tilde{r}(t)=r(t)+n(t),\label{eq:AWGN}
\end{equation}
 where $r(t)$ is, as before, the optical signal obtained by propagating
the input signal $q(t)$ in a \emph{noiseless} channel, and $n(t)$
is circularly-symmetric complex white Gaussian noise with power spectral
density $N_{0}$. We would like to stress that (\ref{eq:AWGN}) is
only an \emph{ansatz} and not an actual identity.

An \ac{ADC} recovers the samples of the received noisy optical signal
and collects them in the vector $\tilde{\boldsymbol{r}}$. The \ac{ADC}
is modeled as a rectangular filter with bandwidth $\nu/(2T_{s})$
that acquires $\nu$ samples per symbol time. Assuming that the filter
bandwidth is larger than the overall signal bandwidth, under the \ac{AWGN}
assumption (\ref{eq:AWGN}), $\tilde{\mathbf{r}}$ is a sufficient
statistic and we can write $\tilde{\boldsymbol{r}}=\boldsymbol{r}+\boldsymbol{n}$,
where $\boldsymbol{r}$ is a vector collecting the samples of $r(t)$,
and $\boldsymbol{n}$ is a vector of i.i.d. circularly-symmetric complex
Gaussian r.v.s $n_{k}$, with zero mean and variance $\sigma^{2}=E\{|n_{k}|^{2}\}=N_{0}\nu/T_{s}$.
Therefore, conditional on $\boldsymbol{r}$, the components of $\tilde{\boldsymbol{r}}$
are independent. For the sake of simplicity, let $\tilde{\boldsymbol{r}}_{k}$
(and $\boldsymbol{r}_{k}$) be the vector of length $\nu$ representing
the noisy optical signal (and, respectively, its noise-free equivalent)
in the time window $[t_{k-1},t_{k})$. Hence, $\tilde{\boldsymbol{r}}$
can also be written as a compound vector $\tilde{\boldsymbol{r}}=(\tilde{\boldsymbol{r}}_{1},\ldots,\tilde{\boldsymbol{r}}_{N_{b}})$
containing the samples of the received signal in $[-T_{s}/2,(N_{b}-1/2)T_{s})$,
i.e., the time window of duration $N_{b}T_{s}$ in which information
is encoded. Indeed, with respect to the \ac{QAM} signal $s(t)$,
the optical signal after the \ac{GLME} broadens in time, developing
a sort of right tail that extends outside the considered detection
window, i.e., for $t>(N_{b}-1/2)T_{s}$, as shown for instance in
Fig.~\ref{fig:2}. Therefore, a longer vector $\tilde{r}$ should
be considered to obtain a sufficient statistic. However, in the decision-feedback
strategy derived in the following, this tail could be used only to
detect the last symbol $x_{N_{b}}$ of the sequence, with a negligible
contribution to the overall performance. Hence, for the sake of simplicity,
we simply discard it. 

According to the \ac{MAP} strategy, and assuming equally likely input
symbols, optimal detection maximizes the probability density function
(pdf) $p(\tilde{\boldsymbol{r}}|\boldsymbol{x})$ of the vector $\tilde{\boldsymbol{r}}$
conditional upon the transmitted sequence $\boldsymbol{x}=(x_{1},..,x_{N_{b}})$.
Thus, an optimum \ac{RX} chooses the sequence $\hat{\boldsymbol{x}}$
according to

\begin{equation}
\hat{\boldsymbol{x}}=\argmax_{\boldsymbol{x}}p(\tilde{\boldsymbol{r}}|\boldsymbol{x}).\label{eq:MAP}
\end{equation}
Since, conditional upon $\boldsymbol{x}$, the components of $\tilde{\boldsymbol{r}}$
are independent, the pdf in Eq.~(\ref{eq:MAP}) can be factorized
as 
\begin{equation}
\hat{\boldsymbol{x}}=\argmax_{\boldsymbol{x}}\prod_{k=1}^{N_{b}}p(\tilde{\boldsymbol{r}}_{k}|\boldsymbol{x})=\argmax_{\boldsymbol{x}}\sum_{k=1}^{N_{b}}\ln p(\tilde{\boldsymbol{r}}_{k}|\boldsymbol{x}),\label{eq:NNb}
\end{equation}
where the second equality stems from the monotonic behavior of the
logarithm. 

The \ac{NFT} property (\ref{eq:NFTproperty}) implies that the signal
samples in the time window $[t_{k-1},t_{k})$, i.e., those collected
in $\tilde{\boldsymbol{r}}_{k}$, depend only on the symbols $(x_{1},\ldots,x_{k})$.
Therefore, an optimum RX performs decisions according to

\begin{equation}
\hat{\boldsymbol{x}}=\argmax_{\boldsymbol{x}}\sum_{k=1}^{N_{b}}\mathrm{ln}\,p(\tilde{\boldsymbol{r}}_{k}|(x_{1},\ldots,x_{k})).\label{eq:optimum_detection}
\end{equation}
Since the symbols $x_{1},\ldots,x_{k}$ uniquely determine $\boldsymbol{r}_{k}$,
under the AWGN assumption we have
\begin{equation}
p(\tilde{\boldsymbol{r}}_{k}|(x_{1},\ldots,x_{k})=\frac{1}{(\pi\sigma^{2})^{\nu}}\exp\left(-\Vert\tilde{\boldsymbol{r}}_{k}-\boldsymbol{r}_{k}\Vert^{2}/\sigma^{2}\right)\label{eq:pdfrk}
\end{equation}
so that $\mathrm{ln}\,p\bigl(\tilde{\boldsymbol{r}}_{k}\big|(x_{1},\ldots,x_{k})\bigl)=-\nu\,\mathrm{ln}\left(\pi\sigma^{2}\right)-\Vert\tilde{\boldsymbol{r}}_{k}-\boldsymbol{r}_{k}\Vert^{2}/\sigma^{2}$.
This implies that, in order to determine the optimal sequence $\hat{\boldsymbol{x}}$,
all possible $M^{N_{b}}$ input sequences should be considered, which
is not a viable solution. In order to avoid this exponential growth
of complexity, we resort to a sub-optimal decision-feedback strategy\textemdash namely,
\ac{DF-BNFT}\textemdash by which symbols are decided iteratively
for $k=1,\ldots,N_{b}$ as
\begin{equation}
\hat{x}_{k}=\argmax_{X_{i}\in\left\{ X_{1},..,X_{M}\right\} }\mathrm{ln}\,p\bigl(\tilde{\boldsymbol{r}}_{k}|(\hat{x}_{1},...,\hat{x}_{k-1},X_{i})\bigr)\label{eq:suboptimal}
\end{equation}
bringing down to $M\times N_{b}$ the number of sequences to be considered.
Equation (\ref{eq:suboptimal}) can be evaluated comparing the (samples
of the) received signal with (the samples of) $M$ trial waveforms
$r_{k}^{(i)}(t)$ uniquely corresponding to the sequences $\hat{x}_{1},...,\hat{x}_{k},X_{i}$
for each $X_{i}$ in the symbol constellation $\{X_{1},..,X_{M}\}$.
The waveform $r_{k}^{(i)}(t)$ is obtained from the symbol sequence
$\hat{x}_{1},...,\hat{x}_{k},X_{i}$ by the same encoding technique
used at the \ac{TX}, except for precompensation. Let $\boldsymbol{r}_{k}^{(i)}$
denote the vector of length $\nu$ containing the samples of $r_{k}^{(i)}(t)$
in $[t_{k-1},t_{k})$, referred to as detection window. In other words,
$\boldsymbol{r}_{k}^{(i)}$, $i=1,\ldots,M$, are all possible vectors
$\boldsymbol{r}_{k}$ given that the sequence $\hat{x}_{1},...,\hat{x}_{k},X_{i}$
has been sent. The \ac{RX} implements the \ac{DF-BNFT} strategy
through $N_{b}$ steps as follows.

For $k=1,\ldots,N_{b}$: 
\begin{itemize}
\item Digitally obtain the vectors $\boldsymbol{r}_{k}^{(i)}$, for each
$X_{i}$ in the symbol constellation $\{X_{1},..,X_{M}\}$. 
\item Choose $\hat{x}_{k}$ by the rule
\begin{equation}
\hat{x}_{k}=\argmin_{X_{i}\in\{X_{1},..,X_{M}\}}\bigl\Vert\tilde{\boldsymbol{r}}_{k}-\boldsymbol{r}_{k}^{(i)}\bigr\Vert^{2}\label{eq:kdec}
\end{equation}
where this last equation follows from (\ref{eq:pdfrk}) and (\ref{eq:suboptimal}). 
\end{itemize}
The \ac{DF-BNFT} strategy (\ref{eq:suboptimal}) avoids any \ac{ISI}
thanks to decision feedback, which accounts for the dependence of
$\tilde{\mathbf{r}}_{k}$ on previous symbols, and to Corollary~\ref{cor:noISI},
which ensures that $\tilde{\mathbf{r}}_{k}$ does not depend on next
symbols. However, it is still suboptimal compared to (\ref{eq:optimum_detection})
for two reasons. The first reason is that it does not exploit the
information about $x_{k}$ that is contained in the received signal
after $t_{k}$. In fact, as shown in Fig.~\ref{fig:2}, while in
the original \ac{QAM} signal $s(t)$ (on the left) the information
about $x_{k}$ is fully contained in the time interval $[t_{k-1},t_{k})$,
in the corresponding optical signal $r(t)$ (on the right) part of
this information goes to times $t>t_{k}$, the effect becoming more
relevant with power and as $k$ increases. An apparent consequence
of this effect is that the optical signal $r(t)$ has an average amplitude
that decreases with time and a sort of ``tail'' that extends beyond
the duration of the original \ac{QAM} signal. The second reason is
that it is affected by error propagation, as previous decisions $\hat{x}_{1},...,\hat{x}_{k-1}$
might be incorrect, this effect becoming more relevant as $k$ increases,
too. Finally, even the strategy (\ref{eq:optimum_detection}), that
is optimal for the \ac{AWGN} channel (\ref{eq:AWGN}), might be suboptimal
on a real fiber link. The effects of error propagation, information
loss, and non-\ac{AWGN} channel statistics are discussed in more
detail in Section~\ref{sec:Numerical-results}.

While the \ac{DF-BNFT} strategy is conceived to improve system performance
(as it will be verified in Section~\ref{sec:Numerical-results}),
it also has some drawbacks compared to the \ac{FNFT} strategy. Firstly,
 to fulfill the hypothesis of Corollary~\ref{cor:noISI} and avoid
ISI, the pulse shape must be fully confined within a symbol time\textemdash a
more stringent requirement compared to the conventional Nyquist criterion.
This imposes the use of pulses with a wider bandwidth, which has a
negative impact on the achievable spectral efficiency. Secondly, full
channel precompensation is required to use Corollary~\ref{cor:noISI}
at the \ac{RX}, and, therefore, channel compensation cannot be split
between \ac{RX} and \ac{TX} to reduce guard intervals \cite{civelli2017noise,tavakkolnia2016}.
Thirdly, as far as it concerns computational complexity, the \ac{DF-BNFT}
scheme requires the evaluation of $M$ \ac{BNFT}s over $\nu N_{b}$
points. On the other hand, standard NIS detection computes only one
\ac{FNFT} over a comparable number of points. While an exact comparison
between the two strategies depends on the relative complexity and
accuracy of the considered \ac{BNFT} and \ac{FNFT} algorithms, it
is reasonable to assume that the complexity of the \ac{DF-BNFT} detector
is considerably higher than that of standard \ac{FNFT} detection.
However, the aim of this work is to show that standard \ac{FNFT}
detection is far from being optimal, and that a significant increase
of performance can be achieved by an improved detection strategy.

\section{Error probability estimation and bounds}

For a given sequence, the probability of error $P_{e}$ of the \ac{DF-BNFT}
strategy can be evaluated by averaging (over all constellation symbols
and symbols within a burst) the probability $P_{k}^{(m)}$ that an
error occurs when the symbol $X_{m}$ is sent at position $k$, provided
that the symbols $\hat{x}_{1},..\hat{x}_{k-1}$ are correctly detected.
Specifically, 

\begin{equation}
P_{e}=\frac{1}{MN_{b}}\sum_{k=1}^{N_{b}}\sum_{m=1}^{M}P_{k}^{(m)}.\label{eq:Pe}
\end{equation}

As computing $P_{k}^{(m)}$ may be difficult, we will bound it through
standard techniques \cite{Proakis5th_digitalcomm}. By defining the
events $E_{m,i}=\{\text{\ensuremath{X_{i}} is preferred to \ensuremath{X_{m}} when deciding on \ensuremath{x_{k}}}\}$
and
\begin{equation}
E_{m}=\{\text{\ensuremath{X_{m}}\ is not preferred when deciding on \ensuremath{x_{k}}}\}=\bigcup\nolimits _{\begin{subarray}{c}
i=1\\
i\ne m
\end{subarray}}^{M}E_{m,i}
\end{equation}
we can upper bound $P_{k}^{(m)}$ by the union bound
\begin{align}
P_{k}^{(m)}=P\left(E_{m}\bigm|(\hat{x}_{1},...,\hat{x}_{k-1},X_{m})\right) & =P\left(\bigcup\nolimits _{\begin{subarray}{c}
i=1\\
i\ne m
\end{subarray}}^{M}E_{m,i}\Bigm|(\hat{x}_{1},...,\hat{x}_{k-1},X_{m})\right)\nonumber \\
 & \le\sum_{\begin{subarray}{c}
i=1\\
i\neq m
\end{subarray}}^{M}P\left(E_{m,i}\Bigm|(\hat{x}_{1},...,\hat{x}_{k-1},X_{m})\right)\label{eq:ubound}
\end{align}
Due to our AWGN assumption, the pairwise error probabilities in (\ref{eq:ubound})
are given by
\begin{equation}
P\left(E_{m,i}\Bigm|(\hat{x}_{1},...,\hat{x}_{k-1},X_{m})\right)=\mathcal{Q}\left(\frac{d_{k}^{(m,i)}}{2\sigma}\right)\label{eq:pairwise}
\end{equation}
where 
\begin{equation}
d_{k}^{(m,i)}=\bigl\Vert\boldsymbol{r}_{k}^{(m)}-\boldsymbol{r}_{k}^{(i)}\bigr\Vert\label{eq:dk}
\end{equation}
is the Euclidean distance between $\boldsymbol{r}_{k}^{(m)}$ and
$\boldsymbol{r}_{k}^{(i)}$, and $\mathcal{Q}(x)\triangleq\int_{x}^{\infty}e^{-t^{2}/2}\,\mathrm{d}t/\sqrt{2\pi}=\erfc(x/\sqrt{2})/2$
is the Q-function. We can also obtain a useful approximation on $P_{k}^{(m)}$
as follows. Denoting by $C_{m,i}$ the event complementary to $E_{m,i}$,
we have
\begin{equation}
P_{k}^{(m)}=P\left(\bigcup\nolimits _{\begin{subarray}{c}
i=1\\
i\ne m
\end{subarray}}^{M}E_{m,i}\Bigm|(\hat{x}_{1},...,\hat{x}_{k-1},X_{m})\right)=1-P\left(\bigcap\nolimits _{\begin{subarray}{c}
i=1\\
i\ne m
\end{subarray}}^{M}C_{m,i}\Bigm|(\hat{x}_{1},...,\hat{x}_{k-1},X_{m})\right)
\end{equation}
and, taking into account that $P\bigl(C_{m,i}\bigm|(\hat{x}_{1},...,\hat{x}_{k-1},X_{m})\bigr)=1-P\bigl(E_{m,i}\bigm|(\hat{x}_{1},...,\hat{x}_{k-1},X_{m})\bigr)$,
\begin{equation}
P_{k}^{(m)}\simeq1-\prod_{\begin{subarray}{c}
i=1\\
i\neq m
\end{subarray}}^{M}\Bigl(1-P\left(E_{m,i}\bigm|(\hat{x}_{1},...,\hat{x}_{k-1},X_{m})\right)\Bigr)\label{eq:app}
\end{equation}
where the approximation is due to the fact that, in general, the events
$C_{m,i}$ are not mutually independent. Let us now derive a lower
bound. Recalling that the probability of a union of events is lower
bounded by each one of the probabilities of the single events, we
have
\begin{equation}
P_{k}^{(m)}=P\left(\bigcup\nolimits _{\begin{subarray}{c}
i=1\\
i\ne m
\end{subarray}}^{M}E_{m,i}\Bigm|(\hat{x}_{1},...,\hat{x}_{k-1},X_{m})\right)\ge\max_{i\ne m}P\left(E_{m,i}\bigm|(\hat{x}_{1},...,\hat{x}_{k-1},X_{m})\right).\label{eq:lbound}
\end{equation}
In conclusion, from (\ref{eq:ubound}), (\ref{eq:app}), (\ref{eq:lbound}),
and taking into account (\ref{eq:pairwise}), we have the following
upper bound, approximation and lower bound, respectively, on $P_{k}^{(m)}$
\begin{align}
P_{k}^{(m)} & \le\sum_{\begin{subarray}{c}
i=1\\
i\neq m
\end{subarray}}^{M}\mathcal{Q}\left(\frac{d_{k}^{(m,i)}}{2\sigma}\right)\label{eq:uppbound}\\
P_{k}^{(m)} & \simeq1-\prod_{\begin{subarray}{c}
i=1\\
i\neq m
\end{subarray}}^{M}\left(1-\mathcal{Q}\left(\frac{d_{k}^{(m,i)}}{2\sigma}\right)\right)\label{eq:approx}\\
P_{k}^{(m)} & \ge\mathcal{Q}\left(\frac{d_{k}}{2\sigma}\right)\label{eq:lowbound}
\end{align}
where 
\begin{equation}
d_{k}=\min_{i\neq m}d_{k}^{(m,i)}
\end{equation}
and $d_{k}^{(m,i)}$ is as in (\ref{eq:dk}). Replacing (\ref{eq:uppbound})\textendash (\ref{eq:lowbound})
into (\ref{eq:Pe}) gives the corresponding bounds and approximation
on $P_{e}$. We remark that both the bounds and the approximation
were derived by assuming that the channel is AWGN and that previous
decisions are correct. Deviations from this ideal situation might
invalidate the bounds and slightly reduce the accuracy of the approximation,
as shown in Section~\ref{sec:Validation-of-the-bounds} by numerical
simulations.

The above estimates of the probability of error for a given sequence
should be averaged over all possible $M^{N_{b}}$ sequences to obtain
the average error probability. However, to speed up computation, one
can think of averaging over randomly generated sequences until the
result stabilizes. As we will show in Section~\ref{sec:Validation-of-the-bounds},
this practical approach still provides a reasonable accuracy and a
significant computational saving compared to direct error counting.

\section{System performance\label{sec:Numerical-results}}

We simulated the system described in the previous sections and sketched
in Fig.\,\ref{fig:1} by using a 16QAM signal $s(t)$ with symbol
rate $R_{s}=1/T_{s}=\unit[50]{GBd}$. To fulfill the no-\ac{ISI}
requirement on $s(t)$, the supporting pulse was chosen to be $g(t)=\exp\bigl(-12.5(t/T_{s})^{2}\bigr)$,
i.e., a Gaussian pulse with a full width at half maximum (FWHM)
of $(2/5)\sqrt{2\ln2}\,T_{s}\simeq T_{s}/2$, so that about $99.9\%$
of the energy of a pulse is contained in a symbol time. This pulse
shape is chosen to fulfill the requirement of Corollary \ref{cor:noISI}
on the duration of $g(t)$ with only a moderate bandwidth increase
compared to the root-raised-cosine pulse employed in \cite{civelli2017noise}.
Note that, by relaxing the energy constraint, the bandwidth could
be further reduced to approach the one in \cite{civelli2017noise}.
This, however, would also reduce the performance due to the ISI generated
by the pulse tail, with an overall effect on the achievable spectral
efficiency that should be carefully considered. The optimization of
the pulse shape for the maximization of the spectral efficiency is
outside the scope of this work and will be addressed in a future publication.
 A typical example of generated signal is shown in Fig.~\ref{fig:2}
on the left. 

The channel is a standard single-mode fiber of length $L=\unit[2000]{km}$
with \ac{GVD} parameter $\beta_{2}=\unit[-20.39]{ps^{2}/km}$, nonlinear
coefficient $\gamma=\unit[1.22]{W^{-1}km^{-1}}$, and attenuation
$\alpha=\unit[0.2]{dB/km}$. Ideal distributed amplification (spontaneous
emission factor $\eta_{\mathrm{sp}}=4$) is considered along the channel.
The bandwidth of both the \ac{DAC} and the \ac{ADC} is $\unit[100]{GHz}$.
In order to account for the \ac{NFT} boundary conditions and for
temporal broadening due to dispersion, a total of $N_{z}=2000$ guard
symbols is considered in our simulations. The \ac{FNFT} and \ac{BNFT}
operations are performed considering an oversampling factor of $8$
samples per symbols, respectively using the \ac{LP} method \cite{turitsyn2017optica}
and an enhanced version of the Nystrom method \cite{civelliNFT}.
The simulation results are deemed free from numerical inaccuracies,
since it was verified that the noise-free performance was sufficiently
higher than the noisy one (when applicable), and that higher accuracy
did not change the results, similarly to what done in \cite{Civelli_fotonica16}.

As often customary in optical communications, performance is measured
in terms of Q-factor, defined as $Q_{\mathrm{dB}}^{2}=20\log_{10}[\sqrt{2}\erfc^{-1}(2P_{b})]$,
where the probability of bit error $P_{b}$ is given by direct error
counting \cite{Grellier11}. The rate efficiency $\eta=N_{b}/(N_{b}+N_{z})$
is considered to account for the spectral efficiency loss due to the
insertion of guard symbols \cite{civelli2017noise}. The performance
is reported as a function of the mean power per symbol, defined as
$P_{s}=E_{s}/T_{s}$, where $E_{s}=E_{\mathrm{tot}}/N_{b}$ is the
mean energy per information symbol and $E_{\mathrm{tot}}$ the total
energy of the optical signal (so that the actual average optical power
is $\eta P_{s}$).

\begin{figure}
\begin{centering}
\includegraphics[width=0.6\columnwidth]{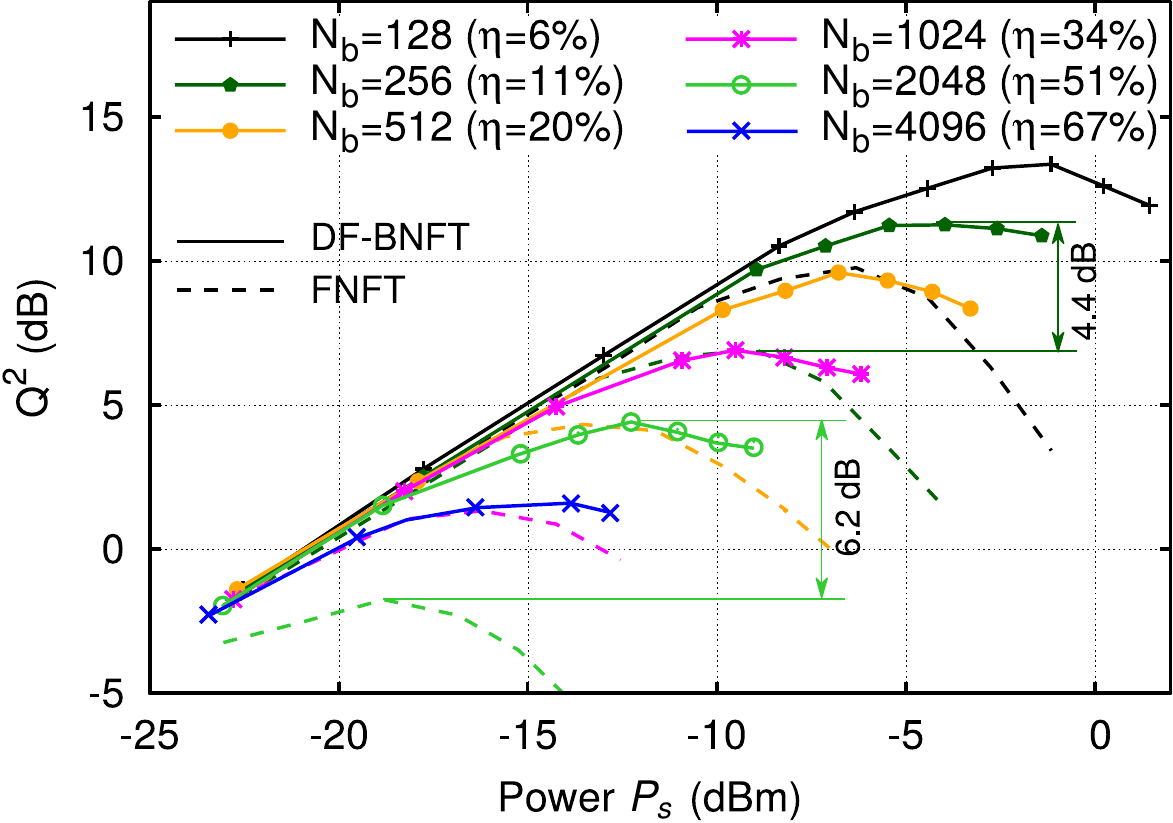}\vspace*{-1ex}
\par\end{centering}
\caption{\label{fig:3}Performance of the NFDM system for \protect\ac{DF-BNFT}
(solid lines) and standard FNFT (dashed lines) detection for different
burst length $N_{b}$ (and rate efficiency $\eta$). $16$\protect\ac{QAM}
symbols, $\beta_{2}=\unit[-20.39]{ps^{2}/km}$, $N_{z}=2000$, $L=\unit[2000]{km}$,
and $R_{s}=\unit[50]{GBd}$.}
\vspace*{-2ex}
\end{figure}

Figure\,\ref{fig:3} shows the \ac{NFDM} performance obtained with
\ac{DF-BNFT} detection (solid lines), and with conventional \ac{FNFT}
(dashed lines), for different burst lengths (same color for same length).
As can be seen, the performance obtained with \ac{DF-BNFT} is significantly
better than that obtained with \ac{FNFT} detection, with an improvement
of $4.4$\,dB for $N_{b}=256$ and $6.2$\,dB for $N_{b}=2048$.
However, performance still decays when increasing $N_{b}$. This behavior
may be due either to the suboptimality of the \ac{DF-BNFT} detection,
or to an intrinsic limitation of the NIS modulation format. For what
concerns suboptimality, there are three possible causes of performance
degradation, already discussed in Section~\ref{sec:Decision-feedback-BNFT-detection}:
the non-\ac{AWGN} statistics of the fiber channel, which is affected
by signal-noise interaction; the error propagation in the decision-feedback
mechanism of (\ref{eq:suboptimal}); and the information loss entailed
by (\ref{eq:kdec}), which neglects the information about $x_{k}$
that is contained in the received signal for $t>t_{k}$. All these
effects become more relevant as the burst length increases. Also
signal-noise interaction increases with the burst length, as the optical
noise interacts with a longer portion of non-zero signal. This effect,
however, saturates when the burst length becomes longer than the channel
memory. One may wonder whether the \ac{NIS} performance shown here
is in accordance with the theoretical estimation of the \ac{SNR}
given in \cite{Turitsyn_nature16}. Such a comparison was performed
in \cite{civelli2017noise} for almost the same system configuration
considered here. The only differences are: (1) the modulation format,
which however does not significantly change the results, and (2) the
chosen pulse shape. In this work, a pulse with a shorter time duration
and, hence, a wider bandwidth is considered. This difference is responsible
for a slight performance improvement.

\begin{figure}
\begin{centering}
\includegraphics[width=0.6\columnwidth]{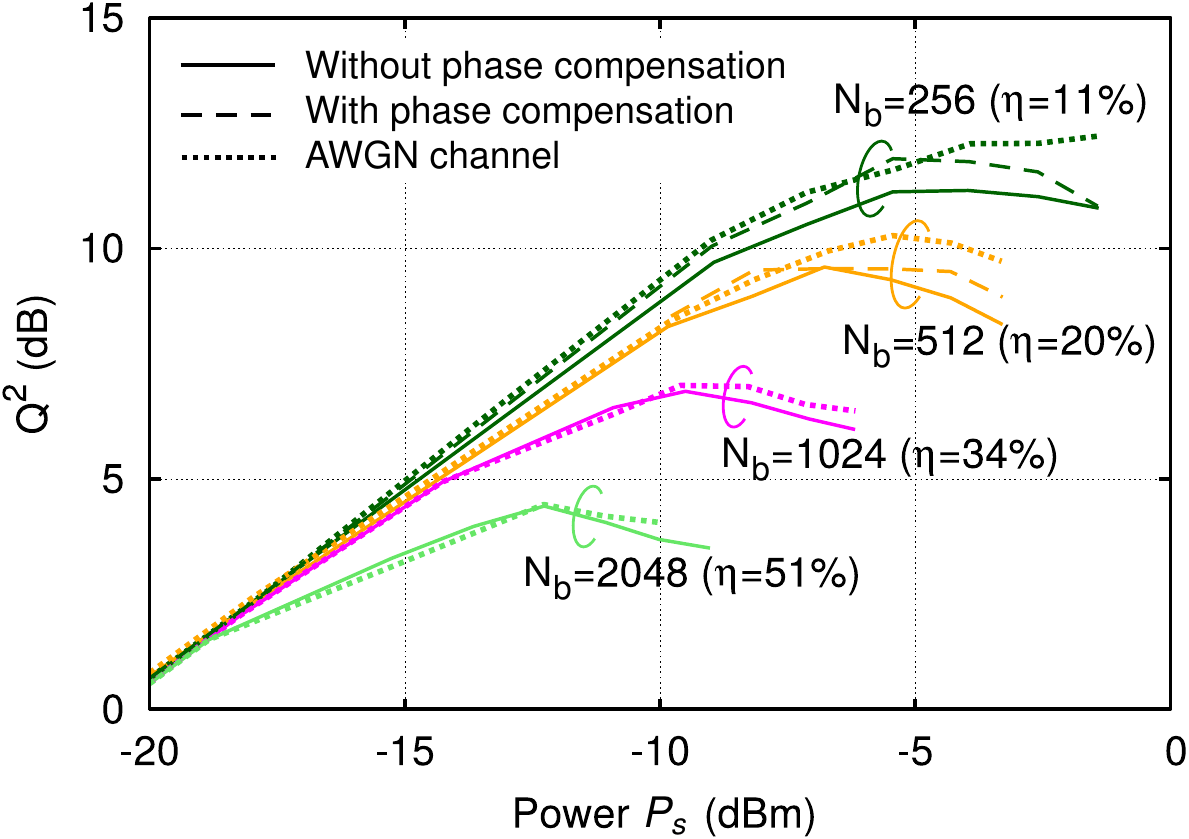}\vspace*{-1ex}
\par\end{centering}
\caption{\label{fig:4}Impact of fiber propagation: performance of DF-BNFT
on the fiber link without (solid lines) and with average nonlinear
phase compensation (dashed lines) and on the \protect\ac{AWGN} channel
(dotted line). Same scenario of Fig.~\ref{fig:3}.}
\end{figure}

The impact of signal-noise interaction during propagation can be estimated
by comparing the performance obtained on the fiber link with the performance
obtained on the corresponding \ac{AWGN} channel, shown in Fig.\,\ref{fig:4}
with dotted lines for $N_{b}=256,\,512,\,1024,\,2048$. For longer
bursts, i.e., $N_{b}\geq1024$, the corresponding dotted and solid
lines are almost indistinguishable. On the other hand, for shorter
bursts, i.e., $N_{b}=256,\,512$, the slight difference between the
dotted and solid lines denotes a small impact of signal-noise interaction
on system performance and a slight deviation of channel statistics
from the \ac{AWGN} assumption (\ref{eq:AWGN}). One of the effects
of signal-noise interaction during propagation is a constant phase
rotation of the optical signal. This deviation can be estimated and
removed from the optical signal considering for detection $\tilde{r}(t)e^{-j\alpha}$,
$\alpha$ being the phase shift, rather than $\tilde{r}(t)$ itself.
Fig.\,\ref{fig:4} shows that a small performance gain can be obtained
with this technique (performance shown with dashed lines) and that
the performance approaches those of the equivalent \ac{AWGN} channel.
 Obviously, when performance is superimposed to that of the \ac{AWGN}
channel, this technique does not affect the detection strategy.

\begin{figure}
\begin{centering}
\includegraphics[width=0.6\columnwidth]{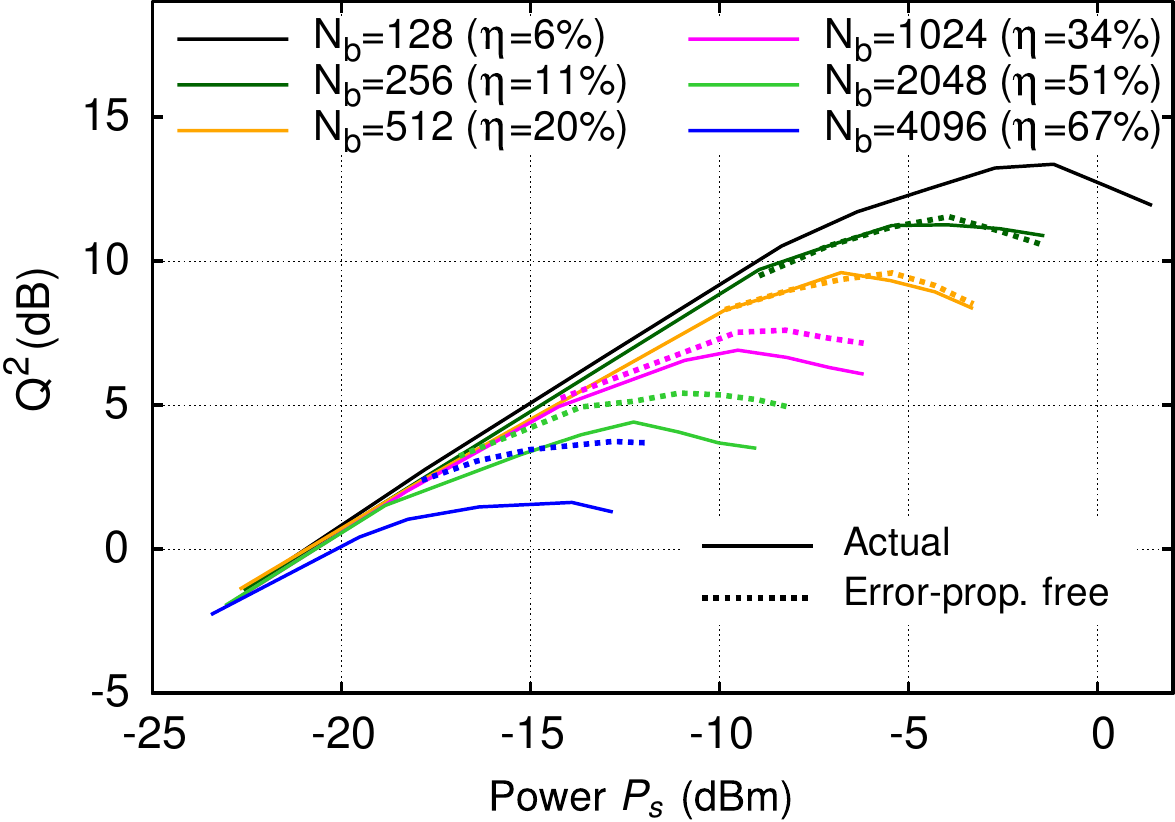}\vspace*{-1ex}
\par\end{centering}
\caption{\label{fig:5}Impact of error propagation due to decision feedback
in the proposed \protect\ac{DF-BNFT} detection strategy: actual performance
(solid lines), and error-propagation-free performance (dotted lines).
Same scenario of Fig.~\ref{fig:3}.}
\end{figure}

As regards error propagation, its impact can be estimated from Fig.\,\ref{fig:5},
in which the actual \ac{DF-BNFT} performance is compared to that
of an ideal detector that makes decisions according to the same strategy
(\ref{eq:suboptimal}), but using the correct symbols $x_{1},\ldots,x_{k-1}$
rather than the detected ones $\hat{x}_{1},\ldots,\hat{x}_{k-1}$.
Contrarily to what observed for signal-noise interaction, the impact
of error propagation is more relevant for longer bursts, while it
tends to be negligible for shorter ones. Indeed, for longer bursts,
farther symbols \emph{(i)} affect more significantly detection, and
\emph{(ii)} are more likely to be wrong, since the probability of
error is higher.

As regards the third possible cause of performance degradation, further
investigations are required to estimate the impact of information
loss due to the suboptimality of (\ref{eq:suboptimal}) and to devise
a better strategy to avoid it. Eventually, the implementation of an
optimal strategy based on (\ref{eq:optimum_detection}), but with
a feasible complexity, would allow to estimate the ultimate performance
of \ac{NIS} modulation and to understand if the observed performance
decay is due to suboptimal detection or to an intrinsic limitation
of this modulation scheme. This is currently under investigation.

\begin{figure}
\begin{centering}
~~~~\includegraphics[width=0.62\columnwidth]{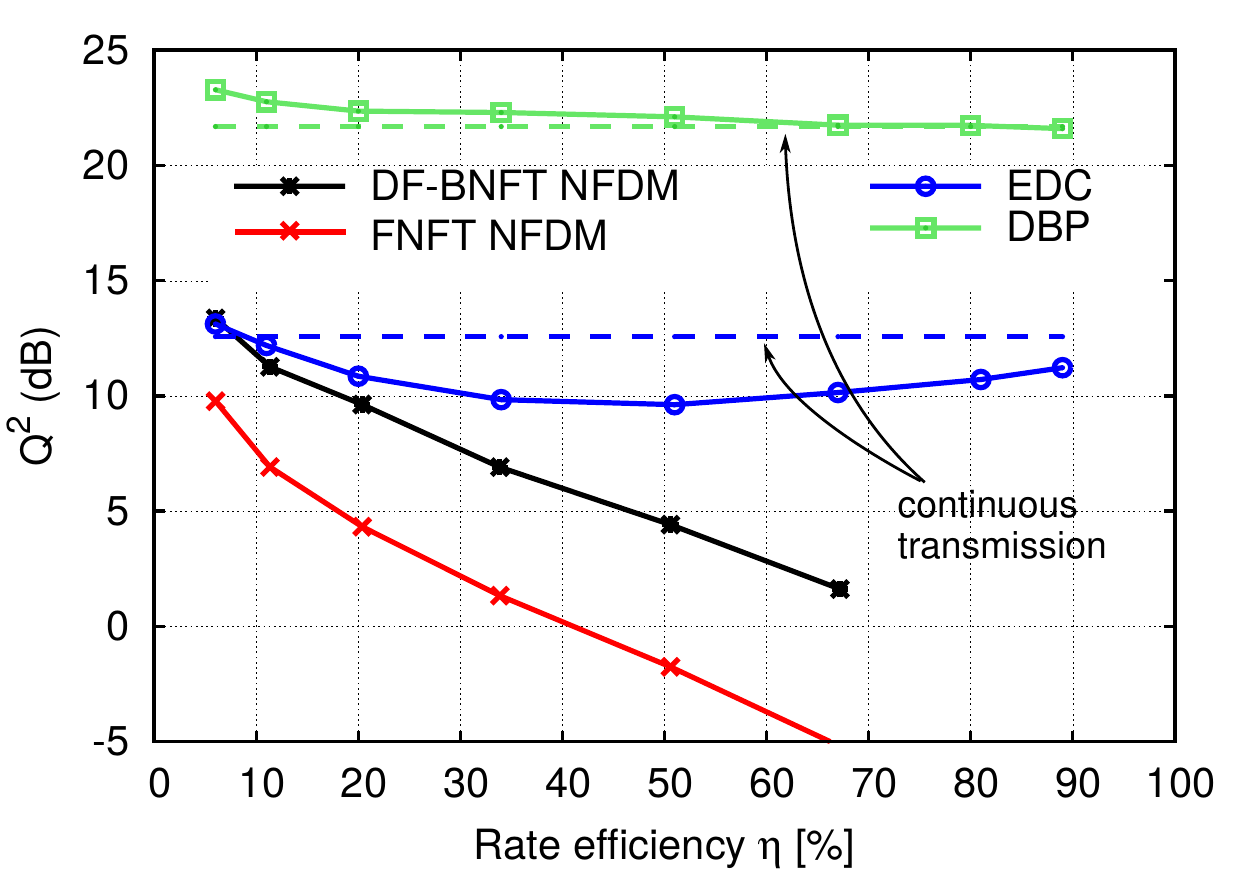}\vspace*{-1ex}
\par\end{centering}
\caption{\label{fig:6}Best achievable performance vs rate efficiency for \protect\ac{NFDM}
with different detection strategies and for conventional systems with
EDC or DBP. Same scenario of Fig.~\ref{fig:3}.}
\vspace*{-2ex}
\end{figure}

The maximum performance achieved by \ac{FNFT} and \ac{DF-BNFT} detection
(at their respective optimum power) in Fig.\,\ref{fig:3} are reported
in Fig.\,\ref{fig:6} as a function of the rate efficiency and compared
with the maximum performance achieved by conventional systems (also
operating in burst mode, for a fair comparison) employing ideal \ac{EDC}
and \ac{DBP} (practically implemented by the split-step Fourier method
with $100$ steps per span of fiber, enough to practically achieve
a perfect compensation of deterministic nonlinearity). \ac{DBP} performance
is estimated from the error vector magnitude \cite{shafik2006extended},
rather than calculated by direct error counting, being the corresponding
error probability too low to be measured. The improvement of \ac{DF-BNFT}
with respect to \ac{FNFT} is quite relevant and slightly increases
with the rate efficiency $\eta$. However, the \ac{DF-BNFT} performance
is still not on par with that of conventional systems and keeps decreasing
at higher rates, when the performance of conventional systems saturate
to the one achieved with continuous (non burst mode) transmission.
Finally, the performance achievable by \ac{DF-BNFT} detection was
investigated also in different scenarios. Fig.\,\ref{fig:7}a refers
to the same system setup used in Figs.\,\ref{fig:3}\textendash \ref{fig:6}
but with a lower dispersion parameter $\beta_{2}=\unit[-1.27]{ps^{2}/km}$
and, therefore, a lower number of guard symbols $N_{z}=125$. The
overall results do not change significantly, as already observed in
\cite{civelli2017noise}, and \ac{DF-BNFT} achieves a performance
improvement of almost $6$\,dB with respect to \ac{FNFT} detection.
The behavior also does not change when using \ac{QPSK} symbols in
the otherwise same system of Fig.\,\ref{fig:3} but with lower symbol
rate $R_{s}=\unit[10]{GBd}$, longer link length $L=\unit[4000]{km}$,
and $N_{z}=160$ guard symbols, as shown in Fig.\,\ref{fig:7}b. 

\begin{figure}
\begin{centering}
\includegraphics[width=0.5\columnwidth]{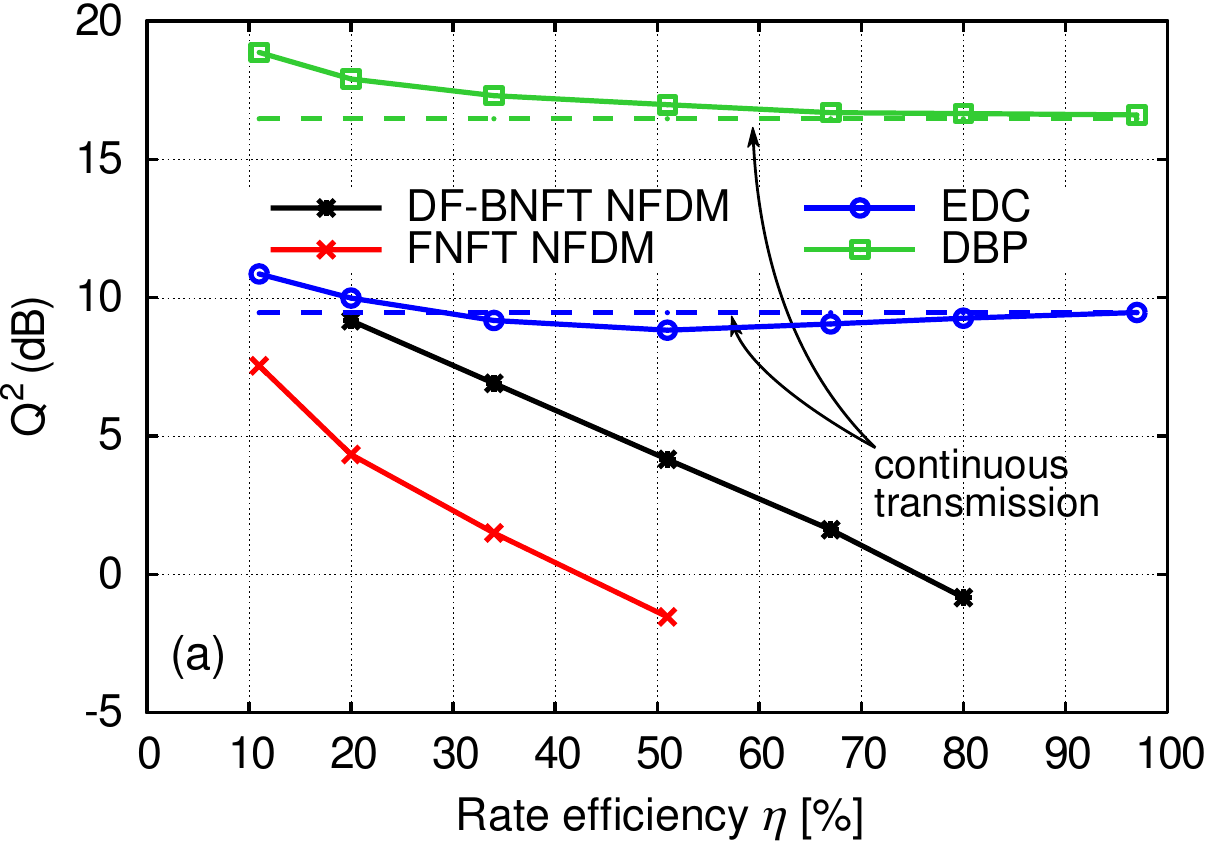}\includegraphics[width=0.5\columnwidth]{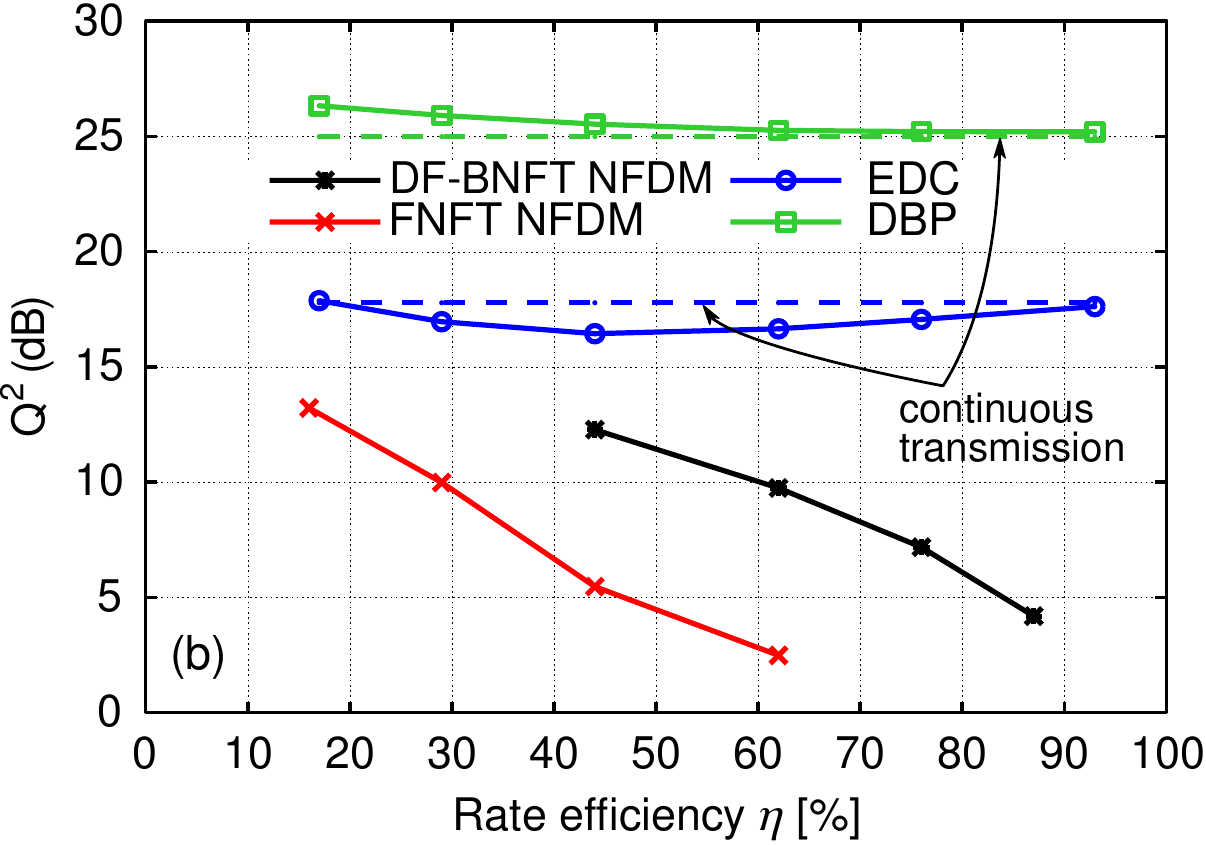}\vspace*{-1ex}
\par\end{centering}
\caption{\label{fig:7}Best achievable performance vs rate efficiency for \protect\ac{NFDM}
with different detection strategies and for conventional systems with
EDC or DBP: (a) low-dispersion fiber with $16$QAM symbols, $\beta_{2}=\unit[-1.27]{ps^{2}/km}$,
$N_{z}=125$, $L=\unit[2000]{km}$, and $R_{s}=\unit[50]{GBd}$; (b)
\protect\ac{QPSK} symbols with $\beta_{2}=\unit[-20.39]{ps^{2}/km}$,
$N_{z}=160$, $L=\unit[4000]{km}$, and $R_{s}=\unit[10]{GBd}$.}
\end{figure}

\begin{figure}
\begin{centering}
\includegraphics[width=0.5\columnwidth]{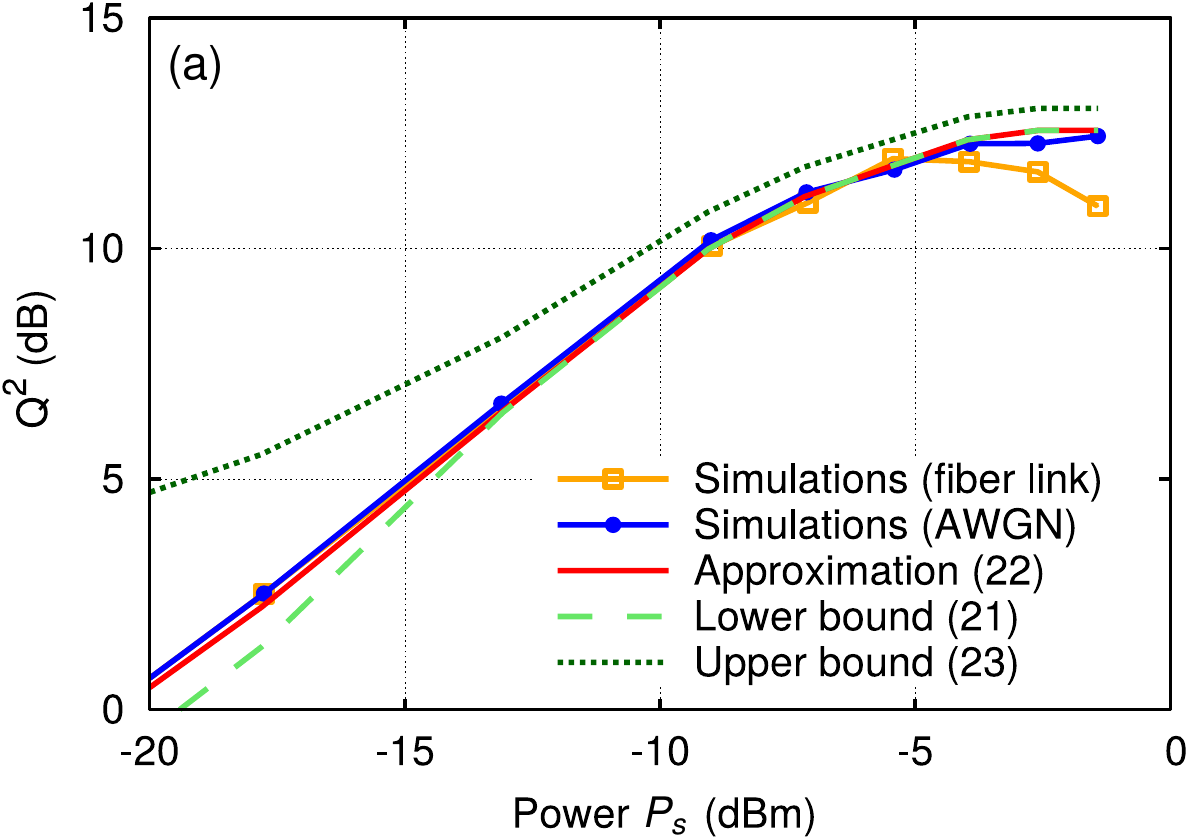}\includegraphics[width=0.5\columnwidth]{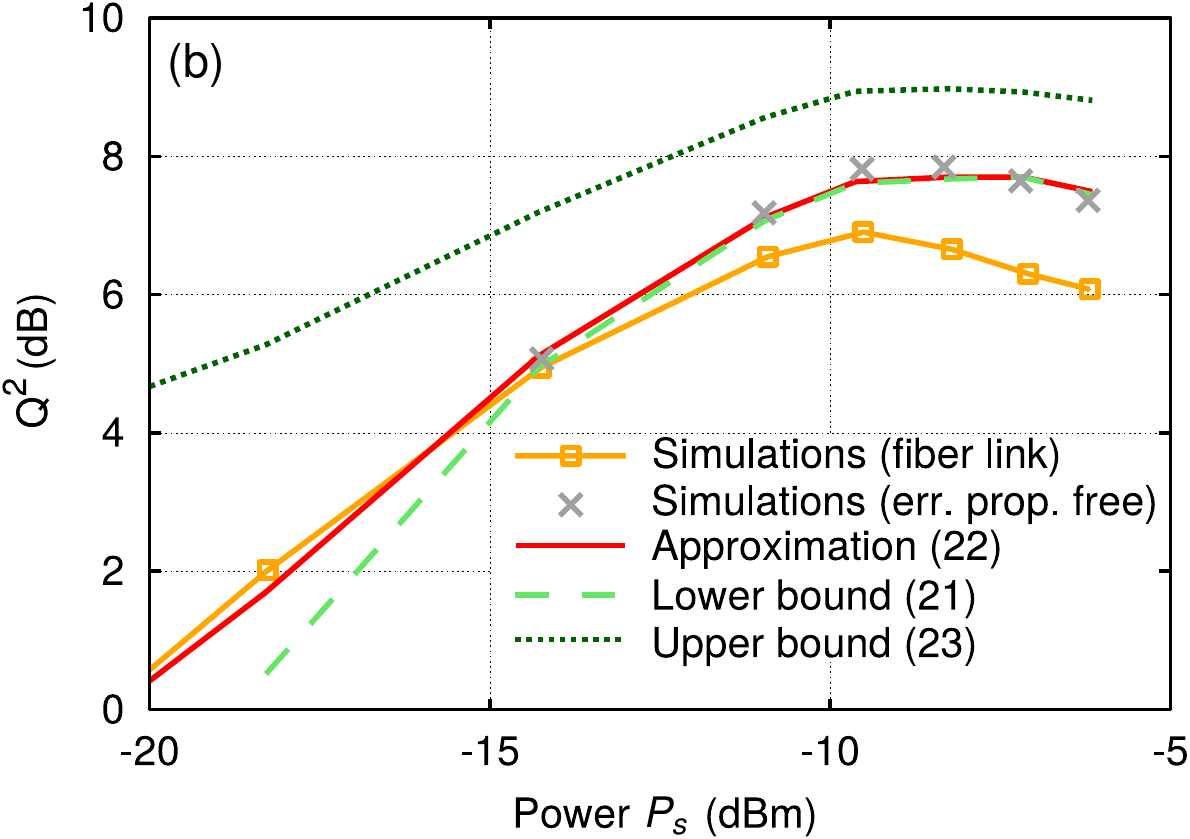}
\vspace*{-1ex}
\par\end{centering}
\caption{\label{fig:8}Validation of the semianalytic approximation and bounds
for the performance of DF-BNFT detection. Same scenario of Fig.~\ref{fig:3},
with (a) $N_{b}=256$ and (b) $N_{b}=1024$.}
\end{figure}

\section{Validation of the approximation and bounds\label{sec:Validation-of-the-bounds}}

The bounds and approximation obtained by replacing (\ref{eq:uppbound})\textendash (\ref{eq:lowbound})
into (\ref{eq:Pe}) are reported in Fig.\,\ref{fig:8} for $N_{b}=256$
and $N_{b}=1024$, after conversion to Q-factor. As the Q-factor is
directly related to the bit-error probability $P_{b}$, the approximation
$P_{b}\simeq P_{e}/M$ is used, assuming that a symbol error always
corresponds to a single bit error. Moreover, if $P_{e}$ increases,
the $Q$ factor decreases, and the other way around. Therefore a lower
(or upper) bound for $P_{e}$ becomes an upper (or lower) bound for
$Q_{\mathrm{dB}}^{2}$. In order to check their accuracy, they are
compared with the performance obtained by numerical simulations for
the actual fiber channel. In both cases, the approximation lies between
the bounds, asymptotically approaching the lower bound when power
increases. At low powers, the approximation is in very good agreement
with numerical simulations. On the other hand, near the optimum power,
the approximation overestimates the actual performance, which falls
slightly below the lower bound. This is due to signal-noise interaction
during propagation (for $N_{b}=256)$ and to error propagation in
the decision-feedback strategy (for $N_{b}=1024)$, both neglected
in the derivation of the bounds and approximation (\ref{eq:uppbound})\textendash (\ref{eq:lowbound}).
In fact, when considering the numerical simulations for the \ac{AWGN}
channel in Fig.~\ref{fig:8}a, and the error-propagation-free simulations
in Fig.~\ref{fig:8}b, they correctly fall between the bounds and
are in excellent agreement with the approximation. 

As already explained, the probability of error for a given sequence
$P_{e}$ in (\ref{eq:Pe}) should be averaged over all possible sequences.
However, the number of possible sequences $M^{N_{b}}$ is practically
unmanageable, making it impossible performing an exact average. Anyway,
most sequences contribute in the same way to the average, so that
we don't need to explore all of them, but only account for the most
significant ones. This can be done by performing a Monte Carlo average,
consisting in randomly generating sequences until the corresponding
average performance stabilizes. This is in contrast with the full
numerical estimation used in Section~\ref{sec:Numerical-results},
in which also the effect of noise is numerically estimated by averaging
over many random realizations. To illustrate the difference between
the two approaches and show the speed of convergence of the various
estimates, Fig.\,\ref{fig:9} reports the same bounds and estimates
shown in Fig.~\ref{fig:8}a as a function of the number of iterations
(corresponding to the number of sequences of length $N_{b}$ over
which the performance is averaged), considering two different values
of the mean power. As can be seen, the computation of the semianalytical
bounds and approximation requires only a few iterations in all cases,
while computing the performance by full numerical simulations requires
a number of iterations that depends on the actual error probability
and, hence, on the input power, such that a sufficient number of error
events are observed. As an example, $30\div40$ iterations suffice
for an input power of $\unit[-9]{dBm}$, while more than 200 iterations
are necessary at the optimum input power of $\unit[-4]{dBm}$. 

\begin{figure}
\begin{centering}
\includegraphics[width=0.6\columnwidth]{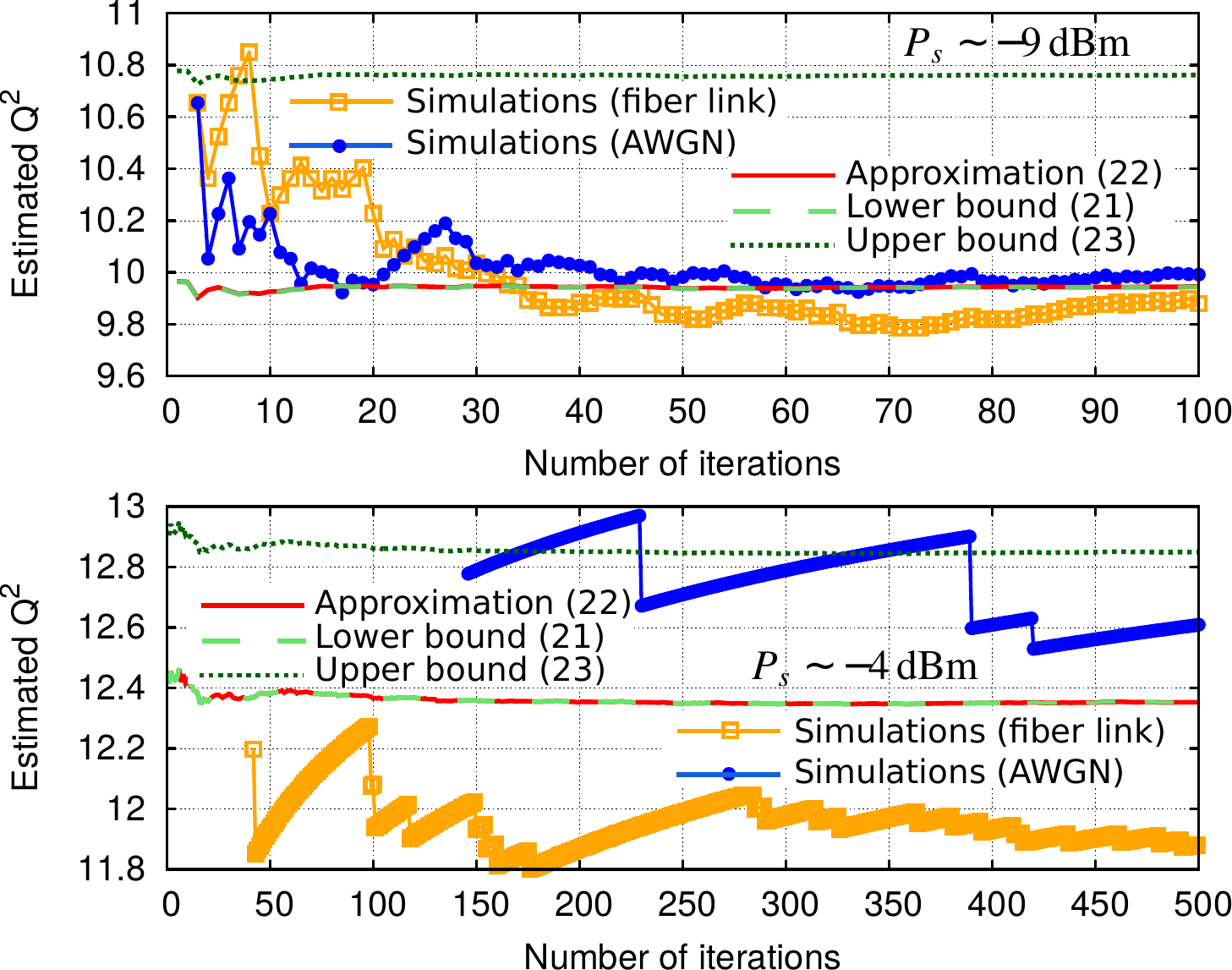}\vspace*{-1ex}
\par\end{centering}
\caption{\label{fig:9}Convergence of the numerical simulations and of the
semianalytic approximation and bounds with the number of iterations
(transmitted sequences). Same scenario of Fig.~\ref{fig:3}, with
$N_{b}=256$ at $P_{s}=-9$dBm (above) and at optimal power $P_{s}=-4$dBm
(below).}
\end{figure}

\section{Conclusions}

Differently than in conventional systems, there is still no theory
for optimum detection when using the \ac{NFT} as a mean to transmit
information. As a result, the existing \ac{NFT}-based transmission
schemes mostly rely on reasonable assumptions and are guided by concepts
and principles borrowed from the more familiar detection theory for
linear systems. This means that big improvements are to be expected
as soon as the principles of nonlinear detection theory are better
understood. There is a lot still to be discovered, but some well known
general principles can be applied independently of the nature of the
systems at hand, be it linear or not. For example, this is the case
for the \ac{MAP} detection strategy to be applied in a given communication
system, once a statistical knowledge of the channel is available or
can be reasonably approximated.

Following this line, by assuming a simple channel model, a novel (suboptimal)
detection strategy for \ac{NFDM} systems, referred to as \ac{DF-BNFT},
has been introduced in this work. In a nutshell, taking advantage
of a peculiar property of the \ac{NFT}, rather than using a strategy
based on the \ac{FNFT} and the minimization of the Euclidean distance
in the nonlinear frequency domain, \ac{DF-BNFT} takes decisions by
minimizing the Euclidean distance in the time domain through \ac{BNFT}
and decision feedback. Moreover, a semianalytical approximation, upper
and lower bounds to system performance have been derived, providing
an effective tool to estimate system performance without resorting
to time-consuming numerical simulations.

As demonstrated by numerical results, the proposed detection strategy
allows for a performance improvement of up to $6.2$\,dB with respect
to standard \ac{FNFT} detection. Despite such a big improvement,
also the performance of the proposed \ac{DF-BNFT} strategy decays
when the burst length increases, similarly to what observed for \ac{FNFT}-based
detection \cite{civelli2017noise}. This peculiar behavior forces
the use of short bursts, separated by long guard times, severely limiting
the overall spectral efficiency achievable by the system. However,
while the performance decays in \ac{FNFT} and \ac{DF-BNFT} detection
appear to be similar, their causes are different: the perturbation
of the nonlinear spectrum caused by optical noise (and non-optimally
accounted for by the detection metrics) in the former case; the information
loss and error propagation caused by the use of the suboptimum strategy
(\ref{eq:suboptimal}) in the latter.

Though the persistence of this undesirable behavior might be daunting,
in fact making \ac{NFDM} not yet competitive with conventional systems,
the relevant performance improvement obtained by a still sub-optimum
strategy is rather encouraging and paves the way for further progresses.
Possible future developments include: the implementation of the optimum
detection strategy (\ref{eq:optimum_detection}) to overcome the limitations
induced by (\ref{eq:suboptimal}); the reduction of computational
complexity; the optimization of the pulse shape to maximize the achievable
spectral efficiency; the investigation of the impact of  more practical
amplification schemes, such as lumped amplification, in which the
lossless path-average model needs to be employed \cite{le2015amplification,le2017naturephotonics};
and the extension of the detection strategy to the dual-polarization
case. Such developments are required to fully exploit the potentials
of the \ac{NFDM} technique and to understand if it can provide any
practical advantages compared to conventional transmission techniques.

\section*{Funding}

Ente Cassa di Risparmio di
Firenze.

\begin{thebibliography}{10}
\newcommand{\enquote}[1]{``#1''}

\bibitem{turitsyn2017optica}
S.~K. Turitsyn, J.~E. Prilepsky, S.~T. Le, S.~Wahls, L.~L. Frumin, M.~Kamalian,
  and S.~A. Derevyanko, \enquote{Nonlinear {F}ourier transform for optical data
  processing and transmission: advances and perspectives,} Optica \textbf{4},
  307--322 (2017).

\bibitem{le2014nonlinear}
S.~T. Le, J.~E. Prilepsky, and S.~K. Turitsyn, \enquote{Nonlinear inverse
  synthesis for high spectral efficiency transmission in optical fibers,} Opt.
  Express \textbf{22}, 26720--26741 (2014).

\bibitem{le2016}
S.~T. Le, I.~D. Philips, J.~E. Prilepsky, P.~Harper, A.~D. Ellis, and S.~K.
  Turitsyn, \enquote{Demonstration of nonlinear inverse synthesis transmission
  over transoceanic distances,} J. Lightw. Technol. \textbf{34}, 2459--2466
  (2016).

\bibitem{tavakkolnia2016}
I.~Tavakkolnia and M.~Safari, \enquote{Dispersion pre-compensation for
  {NFT}-based optical fiber communication systems,} in Proceedings of Conference on
  Lasers and Electro-Optics (CLEO) (IEEE, 2016), pp. 1--2.

\bibitem{Yousefi2014_NFT}
M.~I. Yousefi and F.~R. Kschischang, \enquote{Information transmission using
  the nonlinear {F}ourier transform, {P}arts {I}--{III},} IEEE Trans. Inform.
  Theory \textbf{60}, 4312--4369 (2014).

\bibitem{Yousefi2016}
M.~I. Yousefi and X.~Yangzhang, \enquote{Linear and nonlinear
  frequency-division multiplexing,} in Proceedings of  European Conference on Optical
  Communication (ECOC) (VDE, 2016), pp. 1--3.

\bibitem{bulow2015experimental}
H.~B{\"u}low, \enquote{Experimental demonstration of optical signal detection
  using nonlinear {F}ourier transform,} J. Lightw. Technol.
  \textbf{33}, 1433--1439 (2015).

\bibitem{ablowitz1981}
M.~J. Ablowitz and H.~Segur, \emph{Solitons and the inverse scattering
  transform}, vol.~4 (SIAM, 1981).

\bibitem{civelli2017noise}
S.~Civelli, E.~Forestieri, and M.~Secondini, \enquote{Why noise and dispersion
  may seriously hamper nonlinear frequency-division multiplexing,} IEEE Photon. Technol. Lett. \textbf{29}, 1332--1335 (2017).

\bibitem{menyuk2017NFT}
I.~T. Lima, T.~D. DeMenezes, V.~Grigoryan, M.~O'sullivan, and C.~R. Menyuk,
  \enquote{Nonlinear compensation in optical communications systems with normal
  dispersion fibers using the nonlinear Fourier transform,} J. Lightw. Technol.  (2017).

\bibitem{Proakis5th_digitalcomm}
J.~G. Proakis and M.~Salehi, \emph{Digital Communications (5th ed.)}
  (McGraw-Hill, 2008).

\bibitem{civelliNFT}
S.~Civelli, L.~Barletti, and M.~Secondini, \enquote{Numerical methods for the
  inverse nonlinear {F}ourier transform,} in Proceedings of Tyrrhenian International
  Workshop on Digital Communications (TIWDC) (IEEE, 2015), pp.
  13--16.

\bibitem{Civelli_fotonica16}
S.~Civelli, E.~Forestieri, and M.~Secondini, \enquote{Impact of discretizations
  and boundary conditions in nonlinear frequency-division multiplexing,} in Proceedings of  Fotonica
  2016, (IET, 2016), pp. 1--4.

\bibitem{Grellier11}
E.~Grellier and A.~Bononi, \enquote{Quality parameter for coherent
  transmissions with {G}aussian-distributed nonlinear noise,} Opt. Express
  \textbf{19}, 12781--12788 (2011).

\bibitem{Turitsyn_nature16}
S.~A. Derevyanko, J.~E. Prilepsky, and S.~K. Turitsyn, \enquote{Capacity
  estimates for optical transmission based on the nonlinear {F}ourier
  transform,}  Nat. Commun.  (2016).

\bibitem{shafik2006extended}
R.~A. Shafik, M.~S. Rahman, and A.~R. Islam, \enquote{On the extended
  relationships among EVM, BER and SNR as performance metrics,} in Proceedings of  International Conference of
  Electrical and Computer Engineering (ICECE) (IEEE,2006),  pp. 408--411.

\bibitem{le2015amplification}
S.~T. Le, J.~E. Prilepsky, and S.~K. Turitsyn, \enquote{Nonlinear inverse
  synthesis technique for optical links with lumped amplification,} Opt.
  Express \textbf{23}, 8317--8328 (2015).

\bibitem{le2017naturephotonics}
S.~T. Le, V.~Aref, and H.~Buelow, \enquote{Nonlinear signal multiplexing for
  communication beyond the Kerr nonlinearity limit,} Nat. Photon.
  \textbf{11}, 570 (2017).

\end{thebibliography}
\end{document}